\newif\ifsubmission
\newif\ifanonymous
\renewcommand\footnotetextcopyrightpermission[1]{} 
  \newcommand{\TODO}[1]{}
  \newcommand{\TODO}[1]{\textcolor{red}{TODO: #1}}
\newcommand{\remove}[1]{}
\newcommand{\sect}[1]{Section~\ref{#1}}
\newcommand{\fig}[1]{Figure~\ref{#1}}
\newcommand{\sys}[1]{SLSI}
\newcommand\blfootnote[1]{%
  \begingroup
  \renewcommand\thefootnote{}\footnote{#1}%
  \addtocounter{footnote}{-1}%
  \endgroup
}
\setlist[itemize]{noitemsep,nolistsep}
\newcounter{counter}[subsubsection]
\def\BState{\State\hskip-\ALG@thistlm}
\newtheorem{thm}{Theorem}[section]
\newtheorem{prop}[thm]{Proposition}
\newtheorem{cor}{Corollary}
\newtheorem{defn}{definition}[section]
\begin{document}

\title{Sub-Linear Privacy-Preserving Near-Neighbor Search}

\author{M. Sadegh Riazi{\Large $^*$}}
\affiliation{UC San Diego}
\email{mriazi@ucsd.edu}

\author{Beidi Chen{\Large $^*$}}
\affiliation{Rice University}
\email{beidi.chen@rice.edu}

\author{Anshumali Shrivastava}
\affiliation{Rice University}
\email{anshumali@rice.edu}

\author{Dan Wallach}
\affiliation{Rice University}
\email{dwallach@cs.rice.edu}

\author{Farinaz Koushanfar}
\affiliation{UC San Diego}
\email{farinaz@ucsd.edu}


\begin{abstract}
In Near-Neighbor Search (NNS), a client queries a database (held by a server) for the most similar data (near-neighbors) given a certain similarity metric. The Privacy-Preserving variant (PP-NNS) requires that neither server nor the client shall learn information about the other party's data except what can be inferred from the outcome of NNS.
The overwhelming growth in the size of current datasets and the lack of a truly secure server in the online world render the existing solutions impractical; either due to their high computational requirements or non-realistic assumptions which potentially compromise privacy. PP-NNS having query time {\it sub-linear} in the size of the database has been suggested as an open research direction by Li et al. (CCSW'15). In this paper, we provide the first such algorithm, called Privacy-Preserving Locality Sensitive Indexing (\sys{}) which has a sub-linear query time and the ability to handle honest-but-curious parties.
At the heart of our proposal lies a secure binary embedding scheme generated from a novel probabilistic transformation over locality sensitive hashing family.
We provide information theoretic bound for the privacy guarantees and support our theoretical claims using substantial empirical evidence on real-world datasets.
\end{abstract}

\maketitle

\keywords{Near-neighbor search, privacy-preserving, locality sensitive hashing, probabilistic embedding.}

\section{Introduction}\label{sec:intro}
\blfootnote{{\Large $^*$} Equal contribution of the first two authors.}
Near-Neighbor Search (NNS) is one of the most fundamental and frequent tasks in large-scale data processing systems. In NNS problem, a server holds a collection of users' data; a new user's objective is to find all similar data to her query given a certain similarity metric.
NNS is used in personal recommendations of friends, events, movies, etc.~\cite{su2009survey}, online classification based on $K$-NN search, face recognition~\cite{sadeghi2009efficient}, secure biometric authentication~\cite{barni2010privacy,blanton2011secure}, privacy-preserving speech recognition~\cite{pathak2012privacy}.
The demand for privacy in big-data systems has led to an increasing interest in the problem of Privacy-Preserving Near-Neighbor Search (PP-NNS).
In PP-NNS, all of the clients' data must remain private to their respective owners. This implies that not only server(s), but also a new client who queries the database, should not learn information about other clients' data except the NNS result.

The above setting is natural and ubiquitous in the online world where matching and recommendations are common~\cite{polat2003privacy}. For example, on dating websites, a client is interested in finding similar profiles (near neighbors) without revealing her attributes to anyone.
Note that, it is problematic to assume any trusted server in real settings. 
A well-publicized recent example is Yahoo's massive leak, which compromised 500 million user accounts including private information such as phone number, date of birth, or even answers to security questions~\cite{yahoo}. It is, therefore, desirable that the protocol does not rely on the complete security of participating servers and even if data from the server is compromised, the user's information must remain secure.


Keeping in mind both big-data and modern security challenges, four main requirements have to be satisfied: (i) one shall not assume any trusted server, (ii) data owners (clients) are not trusted, (iii) modern datasets are very high dimensional, and (iv) the query time must be sub-linear (near constant) in the number of clients (or database size) in order to handle web-scale datasets. Sub-linear privacy-preserving solution without any trusted party is currently considered to be a critical, yet open, research direction as stated in a recent article~\cite{li2015exploring}.



Due to the importance of the PP-NNS problem, there have been many attempts to create a practical solution.
In theory, any function (e.g., NNS) with inputs from different parties can be evaluated securely without revealing the input of each party to another using Secure Function Evaluation (SFE) protocols such as Garbled Circuit (GC) protocol.
While the SFE protocols have been continuously improving in efficiency, they still suffer from huge execution times and massive communication between executive servers. In addition, realizing NNS with any of the SFE protocols faces the scalability issue. These protocols scale (at best) linearly with respect to the size of the database~\cite{rane2013privacy}, undermining requirement four.
As we describe later, we only utilize GC for a small part of the computation.

Supporting NNS on encrypted data is an active area of research~\cite{wong2009secure,hu2011processing,yao2013secure,elmehdwi2014secure}. Unfortunately, available crypto-based solutions fail to support high dimensional data and they usually require multiple rounds of communication between user and the server.
Mylar~\cite{popa2014building} is a system for web applications that works on top of encrypted data which is proved to be insecure by Grubbs et al.~\cite{grubbs2016breaking}.
One of the most adopted solutions is Asymmetric Scalar-Product-preserving Encryption (ASPE)~\cite{wong2009secure}. However, not only this scheme has linear query complexity in terms of the size of the database, it has been proven to be insecure against chosen plaintext attack by Yao et al.~\cite{yao2013secure}.
More generally, they have proved that secure NNS is at least as hard as Order Preserving Encryption (OPE). Since it has been proven that it is impossible to have secure OPE under standard security models~\cite{boldyreva2009order,boldyreva2011order}, {\it it is not feasible to have a Secure NNS under standard security models such as Ciphertext Indistinguishability under chosen Plaintext Attack (IND-CPA)}.
In this paper, we define $\epsilon$-security and show that our solution limits the information leakage (for any arbitrary upper bound) while having a practical sub-linear PP-NNS.

We propose Privacy-Preserving Locality Sensitive Indexing (\sys{}) as a practical solution for the sub-linear PP-NNS on high dimensional datasets.
Performing NNS on a very high dimensional database is a non-trivial task even when data privacy is not a constraint. For example, NNS algorithms based on k-d trees are marginally better than exhaustive search~\cite{indyk2004nearest} in high-dimensional data spaces.
Our solution has two main components: (i) a novel probabilistic transformation over locality sensitive hashing family (\sect{sec:secLSH}) and a (ii) secure black-box hash computation method based on the GC protocol (\sect{sec:secure_hash}).

Locality Sensitive Hashing (LSH) is the only line of work which guarantees sub-linear query time approximate near-neighbor search for high-dimensional datasets~\cite{Proc:Indyk_STOC98}.
One fundamental property of LSH-based binary embedding is that it preserves all pairwise distances with little distortion~\cite{johnson1984extensions}, eliminating the need for sharing original attributes. However, the bits of the binary embeddings have enough information to estimate any pairwise distance (or similarity) between any two users~\cite{Proc:Bayardo_WWW07}, which makes them unsuitable in settings with no trusted party. \emph{We argue that the ability to estimate all pairwise distances is sufficient but not necessary for the task of near-neighbor search.}
In fact, we show for the first time, that the ability to estimate distances compromises the security of LSH-based embedding; rendering them susceptible to ``triangulation'' attack (see Section~\ref{sec:traingulations}). In this work, we eliminate the vulnerability of LSH with minimal modification while not affecting the sub-linear property.


{\bf Contributions.} Our main contributions are as follows:
\begin{itemize}
	\item We propose the first algorithm for PP-NNS with query time sub-linear in the number of clients. No trusted party or server is needed for handling sensitive data.
	\item We introduce the first generic transformation which makes any given LSH scheme secure for public release in honest-but-curious adversary setting. 
	This advantage comes at no additional cost and we retain all the properties of LSH required for the sub-linear search.
	\item We give information theoretic guarantees on the security of the proposed approach. Our proposed transformation, analysis, and the information theoretic bounds are of independent theoretical interest.
	\item We provide a practical implementation of \emph{triangulation attack} for compromising the security of LSH signatures in high dimensions. Our attack is based on alternating projections. The proposed attack reveals the vulnerability and unnecessary information leakage by the LSH embeddings. In general, we experimentally verify that the ability to estimate all pairwise distances is sufficient for recovering original attributes.
	\item We support our theoretical claims using substantial empirical evidence on real-world datasets.
    We further provide the first thorough evaluation of accuracy-privacy trade-off and its comparison with noise-based privacy.
	Our scheme can process queries against a database of size 3 Billion entries in {\it real time} on a typical PC. Performing the same task with the state-of-the-art GC protocol requires an estimated time of $1.5\times 10^8$ seconds and $1.2\times 10^7$ GBytes of communication (see \sect{ssec:cwsfe}).
\end{itemize}

\section{Preliminaries and Background}\label{sec:LSH}
In this section, we briefly review our notation. Then, we discuss our threat model followed by a background on LSH. Finally, we explain how LSH is currently used for large-scale near-neighbor search when the server is trusted. Please refer to~\cite{Proc:Indyk_STOC98,indyk2006polylogarithmic} for more specific details.

\subsection{Key Notations and Terms}
A server holds a giant collection $\mathcal{C}$ of clients (or data owners), each represented by some $D$ dimensional attribute vectors, i.e., $\mathcal{C} \subset \mathbb{R}^D$. We are interested in finding the answers to queries. The objective is $$\arg\max_{x \in \mathcal{C}} Sim(x,q),$$ where $Sim(.,.)$ is a desired similarity measure. However, the process should prevent any given (possibly dishonest) client from inferring the attributes of other clients, except for the information that can be inferred from the answer of the NNS queries.

We interchangeably use the terms clients, users, data owners, vectors, and attributes. They all refer to the vectors in the collection $\mathcal{C}$. Unless otherwise stated, the hash functions $h$ will produce a $1$-bit output, i.e., $h(x) \in \{0,\ 1\}$. All the hash functions are probabilistic, and in particular, there is an underlying family (class) of hash functions $\mathcal{H}$ and $h$ is drawn uniformly from this family. The draw can be conveniently fixed using random seeds. Our protocol will require some $l$-bits embedding and each of these $l$-bits will be formed by concatenating $l$ independent draws $h_i \ i\in\{1,2,...,l\}$ from some family of hash functions. Similarity search and the near-neighbor search will mean the same thing. Similarity and distances can be converted into each other using the formula distance = 1 - similarity. For any hash function $h$, the event $h(x) = h(y)$, for given pair $x$ and $y$, will be referred to as the collision of hashes.

\subsection{Threat Model}
There are two types of parties involved in our model: servers and clients (data owners). The models in previous works, for example~\cite{li2015exploring}, usually consider trusted servers. 
In this paper, we assume Honest-but-Curious (HbC) adversary model for both data owners and servers. In this threat model, each part is assumed to follow the protocol but is curious to extract as much information as possible about other party's secret data. While we do not trust any server, we assume that the servers do not collude with each other. 
Please note that this is the exact security model of the prior art~\cite{elmehdwi2014secure}. We want to emphasize that the assumption of two non-colluding HbC servers is feasible since two servers can represent two different companies, e.g. Amazon and Microsoft. Due to the business reasons and the fact that any collusion will significantly damage their reputation, it would be very unlikely that two companies will collude since it would be against their interests.

The solutions based on the GC protocol, Fully Homomorphic Encryption (FHE), and Oblivious RAM (ORAM) do not leak {\it any} information about the database and the query~\cite{naveed2015inference} other than what can be inferred from the answer of NNS. All other solutions leak some information either in the setup phase (creating the database) or the query phase. Unfortunately, GC, FHE, and ORAM solutions are computationally too expensive to be employed in real-world~\cite{wong2009secure}.
In this paper we compare the performance of our proposed solution (\sys{}) with GC. In addition, we formalize $\epsilon$-security and prove that the information leakage in our scheme can be made as small as required by tuning a privacy parameter in the protocol. We also compare our solution to the noise addition-based techniques and illustrate, both experimentally and theoretically, that our solution has significantly higher precision/recall for the same security limits. Therefore, our work fills the gap between fast non-secure solutions and impractical but secure ones, providing a practical and controllable trade-off between efficiency and the privacy of users.

Note that the answer to NNS may reveal some information about the query and/or the database, regardless of implementation details and security guarantees of any protocol. For example, if client $i$ and $j$ are very close w.r.t similarity measure (near identical), then the near-neighbor query of client $i$ should return $j$ as the correct answer (with a high probability). A correct answer automatically reveals information that $j$'s attributes are likely to be very similar to $i$'s attributes (with a high probability) even without having knowledge of the other client's attributes. This kind of information leak cannot be avoided by any algorithm answering the near-neighbor query with a reasonable accuracy.

Privacy guarantees in the PP-NNS protocols all rely on the inherent assumption on bounded computations. Given unbounded computations, the adversary can enumerate the whole space of every possible vector and use near-neighbor query until the generated vector returns the target client as the neighbor. In high dimensions, this process will require exponential computations due to the curse of dimensionality, which turns out to be a boon for the privacy of NNS.

\subsection{Locality Sensitive Hashing}
\label{lsh}
A popular technique for approximate near-neighbor search uses the underlying theory of \emph{Locality Sensitive Hashing}~\cite{Proc:Indyk_STOC98}. LSH is a family of functions with the property that similar input objects in the domain of these functions have a higher probability of colliding in the range space than non-similar ones.
In formal terms, consider $\mathcal{H}$ a family of hash functions mapping $\mathbb{R}^{D}$ to some set $\mathcal{S}$.

\begin{defn} [\bf LSH Family]\ A family $\mathcal{H}$ is called\\
$(S_0,cS_0,p_1,p_2)$-sensitive if for any two points $x,y \in \mathbb{R}^{D}$ and $h$ chosen uniformly from $\mathcal{H}$ satisfies the following:

       \begin{itemize}
       	\item if $Sim(x,y)\ge S_0$ then ${Pr}(h(x) = h(y)) \ge p_1$
       	\item if $ Sim(x,y)\le cS_0$ then ${Pr}(h(x) = h(y)) \le p_2$
       \end{itemize}
\label{def:lsh}
\end{defn}

For approximate nearest neighbor search typically, $p_1 > p_2$ and $c < 1$ is needed. An LSH allows us to construct data structures that give provably efficient query time algorithms for the approximate near-neighbor problem with the associated similarity measure.

One sufficient condition for a hash family $\mathcal{H}$ to be a LSH family is that the \emph{\bf collision probability} ${Pr}_\mathcal{H}(h(x) = h(y))$ is monotonically increasing function of the similarity, i.e. \begin{equation}\label{eq:monotonic}{Pr}_\mathcal{H}(h(x) = h(y)) = f(Sim(x,y)),\end{equation} where f is a monotonically increasing function. In fact most of the popular known LSH families, such as MinHash (Section~\ref{sec:MinHash}) and SimHash (Section~\ref{sec:SimHash}), actually satisfy this stronger property. It can be noted that Equation~\ref{eq:monotonic} automatically guarantees the two required conditions in the Definition~\ref{def:lsh} for any $S_0$ and $c < 1$.

It was shown~\cite{Proc:Indyk_STOC98} that having an LSH family for a given similarity measure is sufficient for efficiently solving near-neighbor search in sub-linear time:
\begin{defn} Given a family of
	$(S_0, cS_0, p_1, p_2)$-sensitive hash functions, one can construct a data
	structure for c-NN with $O(n^\rho \log n)$ query time and space $O(n^{1+\rho})$, where $\rho = \frac{\log p_1}{ \log p_2}< 1$.
	\label{def:lshquery}
\end{defn}

\subsection{Popular LSH 1: Minwise Hashing (MinHash)}

\label{sec:MinHash}
One of the most popular measures of similarity between web documents is {\it resemblance} (or Jaccard similarity) $\mathcal{R}$~\cite{Proc:Broder}. This similarity measure is only defined over sets which can be equivalently thought of as binary vectors over the universe, with non-zeros indicating the existence of those elements

The resemblance similarity between two given sets $x$, $y \subseteq \Omega = \{1,2,...,|\Omega|\}$ is defined as

\begin{equation}
\mathcal{R} = \frac{|x \cap y|}{| x \cup y|} = \frac{a}{f_1+f_2 -a},
\end{equation}

\noindent where $f_1 = |x|$, $f_2 = |y|$, and $a = |x \cap y|$.

Minwise hashing~\cite{Proc:Broder_STOC98} is the LSH for resemblance similarity. The minwise hashing family applies a random permutation $\pi:\Omega \rightarrow \Omega$, on the given set $x$, and stores only the minimum value after the permutation mapping. Formally MinHash and its collision probability is given by
\begin{equation}\label{eq:MinHash}h_{\pi}^{min}(x) = \min(\pi(x)); \ \ Pr(h_{\pi}^{min}(x) = h_{\pi}^{min}(y))  = \mathcal{R}.
\end{equation}

%

\subsection{Popular LSH 2: Signed Random Projections (SimHash)}
\label{sec:SimHash}

SimHash is another popular LSH for the {\it cosine} similarity measure, which originates from the concept of Signed Random Projections (SRP)~\cite{Proc:Charikar,Book:Rajaraman_11, Proc:Henzinger_06}. Given a vector $x$, SRP utilizes a random $w$ vector with each component generated from i.i.d. normal distribution, i.e., $w_i \sim N(0,1)$, and only stores the sign of the projection. Formally,
\begin{equation}\label{eq:hsrp}h^{sign}_w(x) = sign(w^Tx).\end{equation}
It was shown in the seminal work~\cite{Article:Goemans_95} that collision under SRP satisfies the following equation:
\begin{equation}
\label{eq:srp}
Pr(h^{sign}_w(x) = h^{sign}_w(y)) = 1 - \frac{\theta}{\pi},
\end{equation}
where $\theta = cos^{-1}\left( \frac{x^Ty}{||x||_2\cdot ||y||_2}\right)$. The term $\frac{x^Ty}{||x||_2\cdot ||y||_2}$, is the cosine similarity.
There is a variant of SimHash that performs similar to the original one~\cite{Book:Rajaraman_11} where, instead of $w_i \sim N(0,1)$, we choose each $w_i$ independently as either +1 or -1 with probability $\frac{1}{2}$. 
Since $1 - \frac{\theta}{\pi}$ is monotonic with respect to cosine similarity $\mathcal{S}$, SimHash is a valid LSH.

\subsection{Mapping LSH to 1-bit}

LSH, such as MinHash, in general, generates an integer value, which is expensive from the storage perspective. It would gain a lot of benefits from having a single bit hashing schemes, or binary locality sensitive bits. It is also not difficult to obtain 1-bit LSH. The idea is to apply a random universal hash function to the LSH and map it to 1-bit.

A commonly used universal scheme is given by
\begin{equation}
h_{1bit}(x) = a\times x \ \ \text{mod} \ 2,
\end{equation}
where $a$ is an odd random number, see~\cite{Proc:Carter_STOC77} for more details. With this 1-bit mapping, any hashing output $h(x)$ can be converted to $1$-bit by applying universal 1-bit hash function $h_{1bit}(.)$. Collision probability of this new transformed 1-bit hashing scheme is given by
\begin{equation}\label{eq:1bitprob}
Pr(h_{1bit}(h(x)) = h_{1bit}(h(y)))= \frac{Pr(h(x)=h(y))+1}{2}.
\end{equation}

It is not difficult to show that $h_{1bit}(h(x))$ is also a valid LSH familiy for the same similarity measure associated with $h(.)$~\cite{Proc:Charikar,Proc:Shrivastava_ECML12}.
Another convenient (and efficient) $1$-bit rehashing is to use the parity, or the most significant bit, of $h_{\pi}^{min}(x)$ as 1-bit hash~\cite{Proc:Shrivastava_ECML12}.

\subsection{PP-NNS in Sub-linear Time with a Trusted Server}
\label{sec:LSHPRivacy}

In the trusted server settings, LSH-based protocols are well-known for sub-linear near-neighbor search~\cite{indyk2006polylogarithmic}. The protocol involves two major steps:
\begin{enumerate}
	\item {\bf Constructing Hash Tables (Pre-processing):} The trusted server fixes random seeds for hash functions (e.g., random permutation for MinHash). Every client $x \in \mathcal{C}$ sends its attributes to the server. The server computes the $l$-bit binary embedding $E(x)$, using appropriate (pre-chosen) LSH schemes $h_i(x)$s. Computing $l$ bits involves generating multiple $1$-bit hashes using independent randomization and concatenating them $E(x) = [h_1(x);h_2(x);...;h_l(x)]$, where $h_i(x)$ is an independent hashing scheme. The server also generates hash tables, as a part of preprocessing for sub-linear time search. New clients can be dynamically inserted into the tables.
	\item {\bf Sub-linear Search (Querying):} To find near-neighbors of any given query point $q$, the trusted server computes the $l$-bit embedding of $q$, $E(q)$. Due to the LSH property of $E$, it suffices to find points $y \in \mathcal{C}$ such that $E(q)$ and $E(y)$ are close in Hamming distance. Searching for close Hamming distance can be done very efficiently in sub-linear time by only probing few buckets in the pre-constructed hash tables~\cite{indyk2006polylogarithmic}.
\end{enumerate}

The above protocol requires a trusted server which handles all the data. The security relies on the fact that no client is allowed to see any part of the computation process. The sub-linearity of the search is due to the classical sub-linear LSH algorithm for Hamming distance search~\cite{Book:Rajaraman_11}.

\begin{figure*}[ht]
       \centering
       \includegraphics[width = 0.85\textwidth]{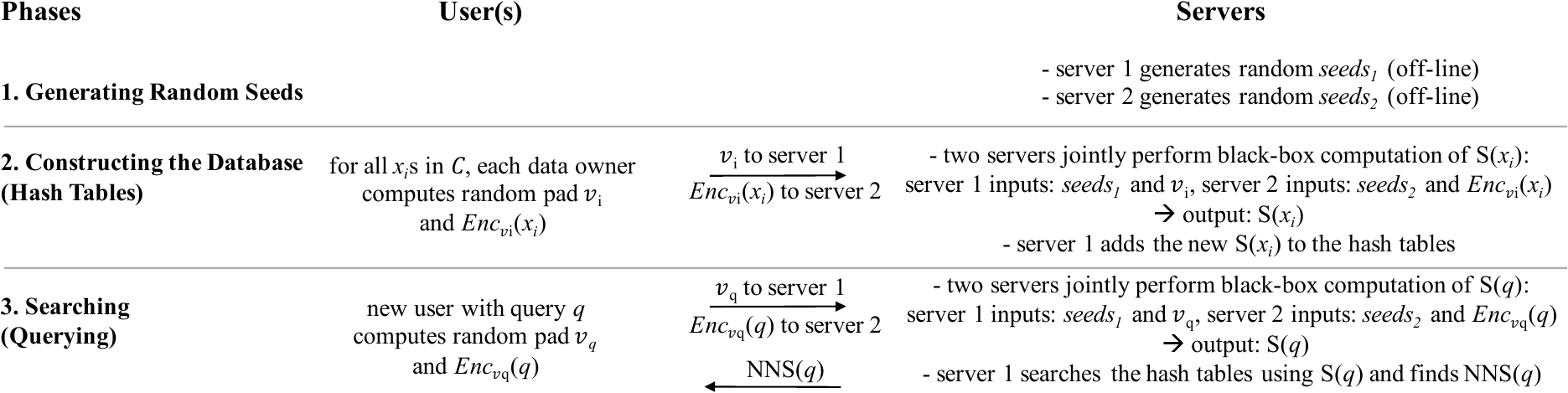}
       \caption{The \sys{} scheme consists of three phases: (i) generating random seeds, (ii) constructing the database, and (iii) searching phase.
       The internal mechanism of $S(.)$ is explained in~Seciton~\ref{sec:secLSH}.
       Black-box hash computation of $S(x)$ is described in Section~\ref{sec:secure_hash} which is based on the GC protocol and takes as input $Enc_v(x)$ (encryption of $x$) and the pad $v$.}
       \label{alg:HashPseudoAlgo}
\end{figure*}

\section{Challenges with Untrusted Server and Types of Attacks}\label{sec:challun}
For obtaining sub-linear solutions, we do not have many choices. LSH-based techniques are well-known methods that guarantee efficient sub-linear query time algorithms even in high dimensions~\cite{Gionis:1999}. Thus, one cannot hope to deviate from the philosophy of generating binary embeddings for data vectors that preserve original near-neighbors in the obtained Hamming space.

Conceptually, there can be three types of attacks when the server is untrusted: (i) querying the database and brute-forcing the space of inputs (ii) extracting the original attribute vector from the hashes using compressive sensing theory, and (iii) analyzing the combination of hashes and measuring their mutual correlation to estimate the original attribute of a user.

{\bf Brute-force/Probing Attacks.}
An attacker can ask for the hash embedding of a random attribute vector and check whether it is equal to another user's hash (if the database is compromised). However, exploring the entire input space is computationally infeasible. If each element of the attribute vector is represented as a 32-bit number, we have $(2^{32})^D$ possible unique inputs. National Institute of Standards and Technology (NIST) states that any attack that requires $2^{128}$ operations is computationally infeasible~\cite{nist}. For example, for the two datasets that are considered in this paper, $D\geqslant186$. Thus, there are $2^{5952}$ possible inputs which are far beyond the security standards. Note that each element of attribute vector might not have uniform distribution, e.g., the value for {\it age} is typically a number between 0 and 100. Therefore, each number may not have maximum randomness (entropy). However, even if each number has minimum randomness (1-bit entropy), for any input vector with $D\geqslant128$, the attack is not possible.

{\bf Compressive Sensing/Reconstruction Attacks.}
The theory of compressive sensing makes it possible to approximately recover $x$ from the hash embedding $E(x)$ given the random seeds used in $E(.)$. Thus, we need to ensure that neither users nor the server have any information about the random seeds used in the computation of $E(.)$. Every embedding $E(x) | x \in \mathcal{C}$, however, should be created using identical random seeds (see Appendix~\ref{sec:CS} for discussions).
We show that using secure function evaluation protocols, it is possible to create secure binary embeddings using the same set of random seeds while no-one knows the seeds used in the hashing function. We describe the solution in Section~\ref{sec:secure_hash}.
Since the generation of the hash embedding is a one-time operation, it is allowed to be costlier as it is independent of other query processes.

{\bf Multilateration/Correlation/Triangulation Attacks.}
Although, recovering $x$ from $E(x)$ is not possible without knowing the random seeds inside $h_i\ \forall i$, it is still possible to recover $x$ from $E(x)$ by combining a ``few'' calls to the function $E(.)$ over few known inputs $y_i's$ (similar to chosen-plaintext attack). The LSH property allows the estimation of any pairwise distance. Such estimations open room for ``triangulation'' attack which is hard to prevent. We explain the problem and the solution in Section~\ref{sec:secLSH}. This information leakage with LSH is one of the major reasons why sub-linear search with semi-honest clients and absence of trusted party is an open research direction.

We use a novel probabilistic transformation to show that converting the bits generated from LSH family into secure bits is suitable for public release in the semi-honest model since it is secure against triangulation attack. Our secure bits preserve only the near-neighbors in Hamming space, unlike LSH, do not allow estimation of all possible distances. Our final $l$-bit embedding functions will be denoted by $S(x)$ instead of $E(.)$ to signify the secure nature of $S(.)$.
Our solutions for making LSH secure is the main contributions of this paper, which makes sub-linear time PP-NNS possible in the semi-honest setting with no trusted party. In the process, we fundamentally leverage the theory of LSH from the privacy perspective.

Before we describe the technical details of our solution in Sections~\ref{sec:secLSH} and \ref{sec:secure_hash} respectively, we briefly give an overview of our final protocol.

\section{Proposed \sys{} Protocol for Sub-Linear Query Time PP-NNS}
The security of the final protocol is based on the proposed secure LSH (described in Section~\ref{sec:secLSH}). Utilizing Secure LSH, we can generate $l$-bit embeddings, for some $l$, $S(.)$, such $S(x)$ is safe for public release. Assuming that we know such embedding $S(.)$, our final protocol for sub-linear query time PP-NNS works in three phases:

{\bf 1. Generating Random Seeds of $S(.)$:}
This process needs to be performed only once and does not require any communication between servers (off-line).
Two servers are required in \sys{}. They need to generate random seeds (that are used in the black-box hash computation of $S(.)$ in phase two and three).
The final internal random seeds of $S(.)$ are generated inside the GC protoocl and is not known to anyone.
The mathematical detail of secure LSH embedding, $S(.)$, are described in Section~\ref{sec:secLSH} while the details on its black-box computation are described in Section~\ref{sec:secure_hash}.

{\bf 2. Constructing the Database (Hash Tables):}
Every data owner $x$ computes her $l$-bit secure binary embedding $S(x)$ using black-box hash computation by communicating to the servers. This $l$-bit signature $S(x)$ serves as the secure public identifier for client $x$.
Server \#1 which possesses all $S(x)$s, pre-processes the collection of $l$-bit binary strings $\{S(x): \ x \in \mathcal{C}\}$ and forms hash tables using the classical algorithm for sub-linear search~\cite{Book:Rajaraman_11}.

{\bf 3. Searching in Sub-Linear Time (Query Phase):} To find near-neighbors of point $x$, it suffices to find points $y$ such that the corresponding secure embeddings, $S(x)$ and $S(y)$, are near-neighbors in Hamming distance. Searching for close Hamming distance can be done very efficiently in sub-linear time using the well-known algorithms~\cite{Book:Rajaraman_11}.

It should be noted that other than the set $S_\mathcal{C} = \{S(x): \ x \in \mathcal{C}\}$, no information is transfered between clients and servers. Hence, if $S_\mathcal{C}$ is not sufficient to recover any of the client's information, the protocol is secure.
For better readability we summarize the end-to-end protocol in Figure~\ref{alg:HashPseudoAlgo}.

\section{The Key Ingredient: LSH Transformation}
\label{sec:secLSH}
We explain why traditional LSH (or any scheme) which allows for estimation of {\it any} pairwise distance is not secure in the HbC adversary model. We describe the attack followed by its solution. We later formalize the privacy budget.

\subsection{``Triangulation'' Attack}
\label{sec:traingulations}
To give more insight into the situation, we describe \emph{triangulation attack} which leads to an accurate estimation of any target client's attribute $q$. For illustration, we focus on two dimensions, but the arguments naturally extend in higher dimensions. Assume that we are given the LSH embedding $E(q)$ of the target point $q$ (instead of secure embedding $S(.)$). An attacker, who wants to know the attributes of $q$, can create three random data (points) in the space $A,\ B,\ $and$\ C$. Creating few random points is not hard, e.g., fake online profiles with random attributes. The protocol allows computation of their LSH embeddings $E(A), \ E(B),\ $and$ \ E(C)$.

Given the random points $A$, $B$, $C$, and their corresponding hashes $E(A)$, $E(B)$, and $E(C)$, the attacker can compute the Hamming distance between the hash values of $E(A)$, $E(B)$, and $E(C)$ with the target hash, $E(q)$. Given this information, the distances of $q$ with $A$, $B$, and $C$, denoted by $d_A$, $d_B$ and $d_C$, can be accurately estimated~\cite{Proc:Bayardo_WWW07}.

{\bf Estimation of Distances from LSH Embeddings: } Let us focus on estimating $d_A$ from $l$-bit binary LSH embedding $E(A)$ and $E(q)$. For illustrations let $l$ be equal to 5 and $E(A) = 11010$ and $E(q) = 10110$. Let $m$ be the measured number of bit matches between $E(A)$ and $E(q)$ out of $l$. For our case, we have $m=3$, because bits at locations 1, 4 and 5 of $E(A)$ and $E(q)$ are equal. Since every bit comes from an independent $1$-bit LSH scheme, we have $\mathbb{E}[n_{match}] = l\times Pr(h_i(q) = h_i(A))=m,$
where $n_{match}$ is the number of bit matches between two LSH embeddings and $\mathbb{E}[.]$ denotes the expected value of a random variable.

Thus we can estimate, in an unbiased way, the collision probability $Pr(h(A) = h(q))$ by the expression $\frac{m}{l}$, the mean number of bit matches. As we discussed in \sect{sec:LSH}, the collision probability is usually a monotonic function of the distance (or similarity) $Pr(h(A) = h(q)) = f(dist(A,q))$ where $f$ is the monotonic function. Every monotonic function has an inverse, thus
$$dist(A,q) = f^{-1}\big(\frac{m}{l}\big),$$
is an accurate estimator of the distance or similarity~\cite{Proc:Shrivastava_ECML12,Proc:Li_ICML14,Shrivastava:SOCC_16}. See Section~\ref{sec:traingluationDes} for details where we describe the implementation of triangulation attack.

After estimating the distances $d_A$, $d_B$ and $d_C$, the attributes of $q$ can be inferred using triangulation. Figure~\ref{fig:tri} shows a two-dimensional illustration of our setup. 
\begin{figure}[ht]
	\centering
	\includegraphics[width = 0.3\columnwidth]{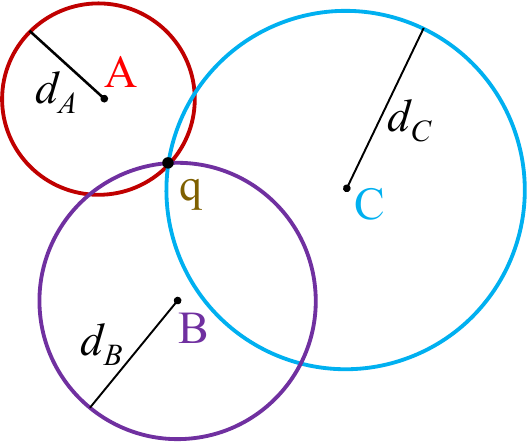}
	\caption{The user $q$ can be located using random points $A$, $B$, and $C$ along with the distances $d_A$, $d_B$, and $d_C$ which are estimated from the available binary embeddings (hashes).}
	\label{fig:tri}
\end{figure}

It should be noted that even if the distance estimation is not very accurate, generating many distance estimates from different random points would be sufficient to achieve a very good accuracy in locating any target point. 

The above illustration only shows two dimensions. For higher dimensions, we show an efficient iterative process, using the idea of alternating projections~\cite{boyd2003alternating}, to infer the attributes even for high-dimensional vectors. In Section~\ref{sec:traingluationDes}, we describe the process in details. Our inference process shows the power of simple iterative machine learning in breaking the security, which itself can be of independent interest. The ease of triangulation-based inference of attributes further emphasizes the need for more secure hashing schemes which we propose in the next section.

\subsection{Probabilistic Transformations for Generating Secure LSH}
Our proposal is a generic framework for making any given LSH privacy-preserving. In particular, we prevent LSH from leaking the distance information without compromising on the accuracy of the near-neighbor search.


We illustrate the main idea using 1-bit MinHash and later we formally introduce the methodology. The collision probability, for any two given data points $x$ and $y$, under 1-bit MinHash is given by $\frac{\mathcal{R}(x,y) + 1}{2}$ (Equation~\ref{eq:1bitprob}). This quantity varies linearly, between 1 to 0.5 as $\mathcal{R}(x,y)$ varies from 1 to 0, with a constant gradient of $\frac{1}{2}$.
Thus, even when $\mathcal{R}(x,y)$ is small, the variation of the collision probability with distance keeps changing and gets reflected in the Hamming distance between the public $l$-bit hash embeddings. This property allows us to estimate the distances accurately by counting the number of bit matches out of the $l$-bits which are public. For example, if 65\% of bits matches, then a good estimate of similarity between $x$ and $y$ is $0.65 \times 2 -1 = 0.3$ (Equation~\ref{eq:1bitprob}).

To make LSH privacy-preserving without losing the accuracy in near-neighbor search tasks, it is necessary to have the flat collision probability with no gradient if the similarity between the pair $x$ and $y$ is below the satisfactory level.
Thus, for any pair of random points $x$ and $y$, the Hamming distance between the publicly available $l$-bit hash codes is around $l/2$ (due to the 0.5 probability of agreement), which prohibits the estimation of distances between $x$ and $y$.

Until now, we have realized that we need to transform the collision probability.
The primary challenge is to find the precise expression for the curve which has the desired behavior and at the same time represents the collision probability of some $1$-bit hashing scheme. It should be noted that not every curve is a collision probability curve~\cite{Proc:Charikar}, therefore, it is not even known if such a mathematical expression exists.

We show that the expression given by $\frac{\mathcal{R}(x,y)^k + 1}{2}$, for some large enough $k$, has the required ``sweet'' property. In particular, we construct a new 1-bit secure MinHash with collision probability $\frac{\mathcal{R}(x,y)^k + 1}{2}$ for any positive integer $k$, instead of $\frac{\mathcal{R}(x,y) + 1}{2}$. The observation is that since $\mathcal{R} \le 1$, $\mathcal{R}^k$ for reasonably large $k$ quickly falls to zero as $\mathcal{R}(x,y)$ goes away from 1. Therefore, the quantity $\frac{\mathcal{R}(x,y)^k + 1}{2}$ will be very close to $\frac{1}{2}$ for even moderately high similarity. Furthermore, we can control the decay of the probability curve by choosing $k$ appropriately. The function $\frac{\mathcal{R}(x,y)^k + 1}{2}$ follows the desired trend of collision probability and is secure from information theoretic perspective.

The key mathematical observation is that we can generate $1$-bit hash functions with collision probability $\frac{\mathcal{R}(x,y)^k + 1}{2}$ by combining $k$ independent MinHashes. Note that, given $x$ and $y$, the probability of agreement of an independent MinHash value is $\mathcal{R}(x,y)$. Therefore, the probability of agreement of all $k$ independent MinHashes will be $\mathcal{R}(x,y)^k$, see~\cite{Proc:Shrivastava_NIPS13} for details. Also, to generate a 1-bit hash value from $k$ integers, we need a universal hash function that takes a vector of $k$ MinHashes and maps it uniformly to 1-bit. The final collision probability of this new $1$-bit scheme is precisely $\frac{\mathcal{R}(x,y)^k + 1}{2}$, as required. The overall idea is quite general and applicable to any LSH. We formalize it in the next section.

\subsection{Formalization}
\label{form}
As we argued in the previous section, we need a universal hashing scheme, $h_{univ}: \mathbb{N}^k \mapsto \{0,1\}$, which maps a vector of $k$ integers uniformly to 0 or 1. There are many ways to achieve this and a common candidate is
\begin{equation}
h_{univ}(x_1, \ x_2,..., \ x_k) = (r_{k+1} + \sum_{i =1}^{k} r_i x_i) \ mod\ p, \ mod \ 2,\notag
\end{equation}
where $r_i$ are fixed randomly generated integers.

Given a hash function $h$, uniformly sampled from any given locality sensitive family $\mathcal{H}$, let us denote the probability of agreement (collision) of hash values of $x$ and $y$ by $P_{c}$,
\begin{equation}
P_{collision} \equiv P_{c} \equiv {Pr}_\mathcal{H}(h(x) = h(y)).
\end{equation}

\begin{defn} [\emph{\bf Secure LSH}]
	Our proposed secure $1$-bit LSH, $h_{sec}$, parameterized by $k$, for any point $x$ is given by
	\begin{equation}
	h_{sec}(x) = h_{univ}(h_1(x),h_2(x),...,h_k(x)),
	\end{equation}
	where $h_i$s$, \ i \in \{1,\ 2,...,\ k\}$ are $k$ independent hash functions sampled uniformly from the LSH family of interest $\mathcal{H}$.
\end{defn}

It is not difficult to show the following:
\begin{thm}\label{th:collision}
	For any vectors $x$ and $y$, under the randomization of $h_{sec}$ and $r_i$ we have
	\begin{equation}\label{eq:pcsec}
	P_{c}^{sec} = Pr_{\mathcal{H},r}\left(h_{sec}(x) = h_{sec}(y)\right) = \frac{P_{c}^k + 1}{2}
	\end{equation}
\end{thm}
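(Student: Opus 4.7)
The plan is to split on whether the two length-$k$ MinHash (or more generally LSH) vectors for $x$ and $y$ agree on every coordinate, and then use the universal hashing property of $h_{univ}$ on the remaining case.

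First, I would introduce the two random vectors $H(x) = (h_1(x),\ldots,h_k(x))$ and $H(y) = (h_1(y),\ldots,h_k(y))$, and condition on the event $E = \{H(x) = H(y)\}$. Since the $h_i$ are independent draws from $\mathcal{H}$ and each collides with probability $P_c$, independence yields $\Pr(E) = P_c^k$. On this event, $h_{univ}$ receives identical inputs, and because $h_{univ}$ is deterministic given the fixed seeds $r_i$, we trivially have $h_{sec}(x) = h_{sec}(y)$, so the conditional collision probability is $1$.

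The substantive step is the complementary event $E^c$, where $H(x) \neq H(y)$ as elements of $\mathbb{N}^k$. Here I would invoke the universal (pairwise-independent) property of the map $h_{univ}(z_1,\ldots,z_k) = \bigl(r_{k+1} + \sum_{i=1}^k r_i z_i\bigr) \bmod p \bmod 2$ with $r_i$ uniformly random modulo $p$ (a prime large enough for the relevant hash range). For any two distinct inputs $u \neq v$ in $\mathbb{N}^k$, the quantity $\sum_i r_i(u_i - v_i) \bmod p$ is uniform on $\{0,1,\ldots,p-1\}$ because at least one $u_i - v_i$ is invertible mod $p$, and the additive shift $r_{k+1}$ makes $h_{univ}(u)$ itself uniform; reducing mod $2$ then gives $\Pr(h_{univ}(u) = h_{univ}(v)) = \tfrac{1}{2}$. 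Hence on $E^c$ the conditional collision probability is exactly $\tfrac{1}{2}$, independent of which particular vectors $H(x)$ and $H(y)$ happen to be.

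Putting the two cases together by the law of total probability:
\begin{equation*}
P_c^{sec} \;=\; \Pr(E)\cdot 1 \;+\; \Pr(E^c)\cdot \tfrac{1}{2} \;=\; P_c^k + (1 - P_c^k)\tfrac{1}{2} \;=\; \frac{P_c^k + 1}{2},
\end{equation*}
which is the claimed identity.

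The main obstacle, and the only step that requires care rather than bookkeeping, is justifying the uniform-$\tfrac{1}{2}$ collision probability on $E^c$: one must check that the stated form of $h_{univ}$ is genuinely $2$-universal (so that distinct inputs are mapped to uniform, pairwise-independent outputs in $\mathbb{Z}_p$), and that the further reduction modulo $2$ still yields probability exactly $\tfrac{1}{2}$ (which holds whenever $p$ is odd, so that $0$ and $1$ appear equally often modulo $2$ across $\{0,\ldots,p-1\}$ up to a negligible $O(1/p)$ bias that vanishes if $p$ is chosen appropriately, or exactly for suitable constructions). Everything else reduces to independence of the $h_i$ and the law of total probability.
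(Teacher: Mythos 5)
Your proof is correct and follows essentially the same route as the paper's: split on whether the $k$-tuples $(h_1(x),\ldots,h_k(x))$ and $(h_1(y),\ldots,h_k(y))$ coincide (probability $P_c^k$ by independence), and use the fact that $h_{univ}$ collides with probability $\tfrac{1}{2}$ on distinct inputs, combining via total probability. You are in fact somewhat more careful than the paper, which simply asserts the $\tfrac{1}{2}$ ``random bit collision'' without verifying the $2$-universality of $h_{univ}$ or the (small, $p$-dependent) bias of the final reduction modulo $2$ that you correctly flag.
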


\begin{proof} It should be noted that $h_{sec}(x) = h_{sec}(y)$ can happen due to the random bit collision with probability $\frac{1}{2}$. Otherwise the two are equal if and only if $$(h_1(x),h_2(x),...,h_k(x))= (h_1(y),h_2(y),...,h_k(y)),$$
\noindent which happens with probability $P_{c}^k$, because each $h_i$ is independent and $Pr(h_i(x) = h_i(y)) = P_{c}$. Therefore, the total probability is $\frac{1}{2} + \frac{1}{2}P_{c}^k$ leading to the desired expression.
\end{proof}

We illustrate the usefulness of the framework proposed above in deriving secure $1$-bit hash for two most popular similarity measures: 1) Secure MinHash for Jaccard similarity and 2) Secure SimHash for Cosine similarity. The idea is applicable to any general LSH including ALSH for Maximum Inner Product Search (MIPS)~\cite{Proc:Shrivastava_NIPS14,Proc:Shrivastava_WWW15,Proc:Shrivastava_UAI15}.

\subsubsection{Making Minwise Hashing Secure (Secure MinHash)}
As an immediate consequence of Theorem~\ref{th:collision}, we can obtain secure 1-bit MinHash in order to search based on the Resemblance similarity,
\begin{equation}
h_{sec}^{min}(x) = h_{univ}(h_{\pi_1}^{min}(x),\ h_{\pi_2}^{min}(x),...,\ h_{\pi_k}^{min}(x)),
\end{equation}
with the following Corollary:
\begin{cor}
	For MinHash we have:
	\begin{equation}\label{eq:pcmin}
	P_{c}^{sec}=Pr\left(h_{sec}^{min}(x) = h_{sec}^{min}(y)\right) = \frac{\mathcal{R}^k + 1}{2}
	\end{equation}
\end{cor}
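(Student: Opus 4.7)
The plan is to observe that this corollary is nothing more than a specialization of Theorem~\ref{th:collision} to the MinHash family, so essentially no new work is required beyond identifying the collision probability of the underlying base hash. First I would instantiate the generic secure construction $h_{sec}$ by taking the hash family $\mathcal{H}$ to be the minwise hashing family, so that each $h_i$ in the definition becomes an independent MinHash $h_{\pi_i}^{min}$, and the secure hash becomes precisely $h_{sec}^{min}(x) = h_{univ}(h_{\pi_1}^{min}(x),\dots,h_{\pi_k}^{min}(x))$ as in the corollary statement.

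Next I would invoke Equation~\ref{eq:MinHash}, which establishes that the base collision probability of MinHash equals the resemblance similarity, i.e.\ $P_c = Pr(h_\pi^{min}(x) = h_\pi^{min}(y)) = \mathcal{R}(x,y)$. All the hypotheses of Theorem~\ref{th:collision} are automatically satisfied because minwise hashing is a valid LSH family and the $k$ permutations $\pi_1,\dots,\pi_k$ are drawn independently.

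Finally I would substitute $P_c = \mathcal{R}$ into the conclusion of Theorem~\ref{th:collision}, namely $P_c^{sec} = \tfrac{P_c^k + 1}{2}$, to obtain $P_c^{sec} = \tfrac{\mathcal{R}^k + 1}{2}$, which is exactly Equation~\ref{eq:pcmin}. There is no real obstacle here; the entire content of the corollary is the substitution step, and the non-trivial probabilistic reasoning (independence of the $k$ base hashes and the random $1/2$ collision contributed by the universal hash $h_{univ}$) has already been absorbed into the statement of Theorem~\ref{th:collision}. The only thing worth being careful about is confirming that the randomness underlying $\mathcal{R}$ as a collision probability genuinely agrees with the $P_c$ appearing in the theorem — but this is immediate from Equation~\ref{eq:MinHash}.
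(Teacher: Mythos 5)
Your proposal is correct and matches the paper's treatment exactly: the paper presents this corollary as an immediate consequence of Theorem~\ref{th:collision}, obtained by instantiating the secure LSH construction with the minwise hashing family and substituting the MinHash collision probability $P_c = \mathcal{R}$ from Equation~\ref{eq:MinHash}. Nothing further is needed.
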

Figure~\ref{fig:r_to_powers} shows that the nature of new collision probability follows the desired trend. The parameter $k$ gives us the knob to tune the probability curve. In section~\ref{sec:practical}, we discuss how to tune this knob.

To generate our final $l$-bit binary embedding $S(x)$, we simply generate $l$ independent $h_{sec}^{min}$, by using independent permutations for MinHashes and independent random numbers for the universal hashing. Therefore, $S(x)$ is the concatenation of $l$ different $h_{sec}^{min}$.
\begin{figure}[t]
	\centering
	\mbox{\hspace{-0.12in}
		\includegraphics[width=0.5\columnwidth]{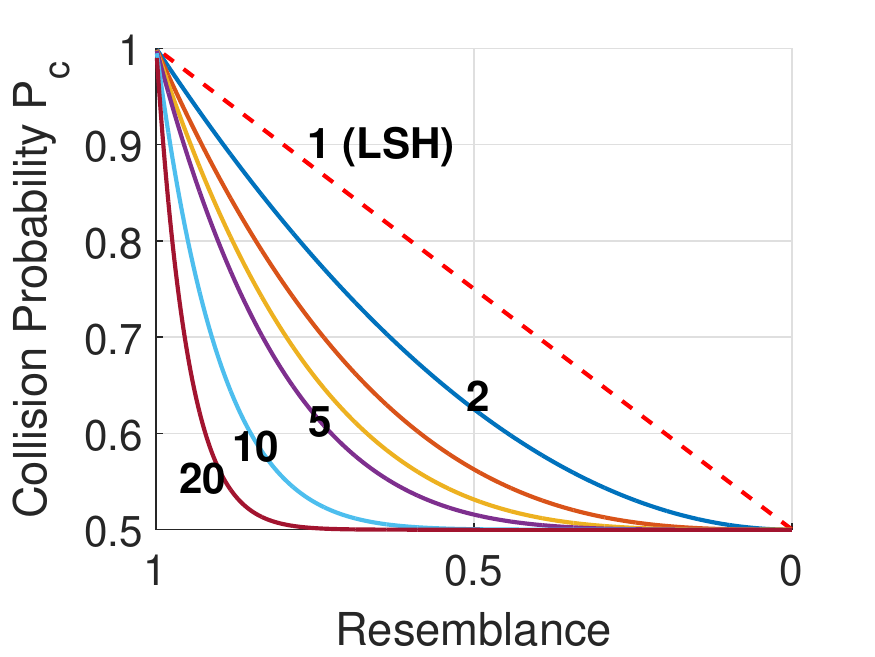}
		\includegraphics[width =0.5\columnwidth]{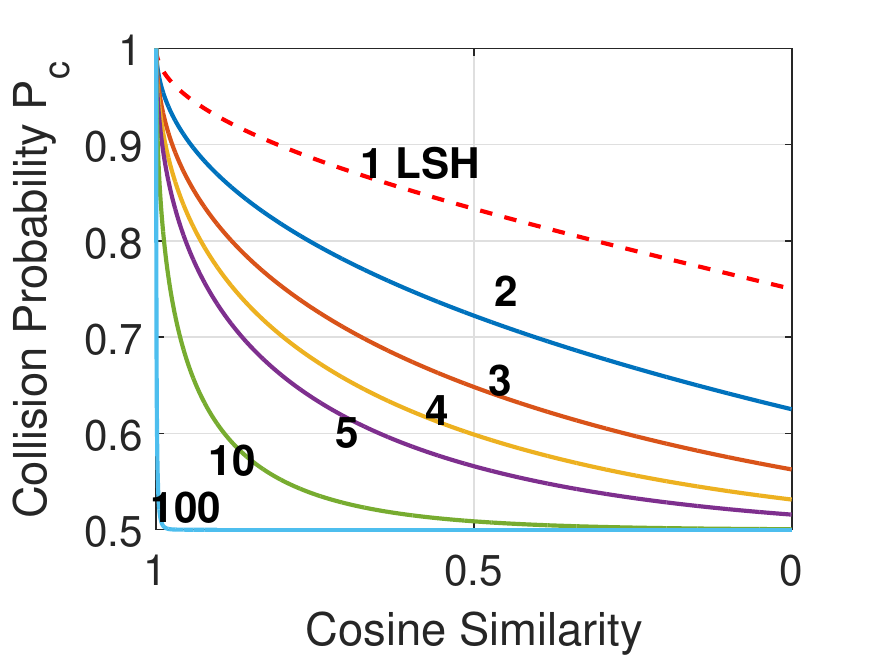}
	}
	\caption{{\bf Left:} The probability of collision of Secure MinHash as a function of R (resemblance) for different values of k. {\bf Right:} The probability of collision of Secure SimHash as a function of $\theta$ (Cosine Similarity) for different values of k. Increasing $k$ makes the collision probability drop to the constant 0.5 rapidly.}
	\label{fig:r_to_powers}
\end{figure}

In \sect{sssec:infoth}, we formally show that our transformed bits are more secure than LSH. In particular, we prove that the mutual information between the two secure 1-bit MinHashes, $h_{sec}^{min}(x)$ and $h_{sec}^{min}(y)$ decays sharply ({\it exponentially} with $k$) to zero as the similarity between $x$ and $y$ (i.e. $\mathcal{R}$) decreases. Thus, there is negligible mutual information about $x$ in the embedding of some random (non-neighbor) $y$.



\subsubsection{Making Signed Random Projections Secure (Secure SimHash)}
\label{sec:secSim}
Analogous to MinHash, we can make SimHash secure with the same properties.
\begin{equation}
h_{sec}^{sign}(x) = h_{univ}(h_{w_1}^{sign}(x),\ h_{w_2}^{sign}(x),...,\ h_{w_k}^{sign}(x)),
\end{equation}
where $w_i$s for all $i$ are independently chosen. Figure~\ref{fig:r_to_powers} (right) summarizes the collision probability as a function of similarity for different values of $k$.
\begin{cor}
	For Secure SimHash we have:
	\begin{align}\label{eq:pcsim}
	P_{c}^{sec}=Pr\left(h_{sec}^{sign}(x) = h_{sec}^{sign}(y)\right) = \frac{(1 - \frac{\theta}{\pi})^k + 1}{2}
	\end{align}
\end{cor}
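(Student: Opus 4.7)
The plan is to recognize this corollary as an immediate specialization of Theorem~\ref{th:collision} to the SimHash family, so no new probabilistic machinery is required. The structure mirrors exactly the derivation that yielded Equation~\ref{eq:pcmin} for MinHash: identify the base collision probability $P_c$ for a single draw from the family, verify that the $k$ independent draws combined inside $h_{univ}$ meet the hypothesis of the theorem, and then substitute.

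First, I would recall from Equation~\ref{eq:srp} that for SimHash with a single random projection $w$, the base collision probability is
\begin{equation*}
P_c \;=\; Pr\bigl(h^{sign}_w(x) = h^{sign}_w(y)\bigr) \;=\; 1 - \tfrac{\theta}{\pi},
\end{equation*}
where $\theta = \cos^{-1}\bigl(x^Ty/(\|x\|_2\|y\|_2)\bigr)$. Next, I would observe that in the definition of $h_{sec}^{sign}$, the projection vectors $w_1,\dots,w_k$ are drawn independently, so the hash values $h_{w_1}^{sign}(x),\dots,h_{w_k}^{sign}(x)$ are mutually independent samples from the SimHash family. This is exactly the hypothesis under which Theorem~\ref{th:collision} is stated, with $\mathcal{H}$ being the SimHash family.

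Applying Theorem~\ref{th:collision} then yields
\begin{equation*}
P_c^{sec} \;=\; \frac{P_c^k + 1}{2} \;=\; \frac{\bigl(1 - \tfrac{\theta}{\pi}\bigr)^k + 1}{2},
\end{equation*}
which is precisely the claimed expression. There is no real obstacle here; the only thing worth flagging is the independence of the $w_i$ and of the universal-hash randomness $r_i$ from the $w_i$, since Theorem~\ref{th:collision} takes expectation over both sources of randomness jointly. Both are assumed by construction, so the corollary follows by direct substitution.
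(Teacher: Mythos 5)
Your proposal is correct and matches the paper's approach exactly: the corollary is stated as an immediate consequence of Theorem~\ref{th:collision} applied to the SimHash family, with the base collision probability $P_c = 1 - \frac{\theta}{\pi}$ from Equation~\ref{eq:srp} substituted into $P_c^{sec} = \frac{P_c^k+1}{2}$. Your extra remark about the joint independence of the $w_i$ and the universal-hash randomness is a reasonable point of care that the paper leaves implicit.
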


\subsubsection{Info. Theoretic Bound as a Function of $k$}\label{sssec:infoth}
We provide the theoretical property of our transformation by quantifying the mutual information between two $l$-bit secure embeddings. The similarity of $x$ and $y$ ($Sim(x,y)$) is denoted as $S_{x,y}$.
\begin{thm}\label{thm:total_mutual_info}
       For any two data points $x$ and $y$, with $S_{x,y}$ being the similarity between them, the mutual information between $h_{sec}(x)$ and $h_{sec}(y)$ is bounded by
       \begin{equation}
       \label{eq:total_mutual_info}
       I(h_{sec}(x);h_{sec}(y)|S_{x,y})<
       l\cdot (2P_{c}^{sec}-1)log(\frac{P_{c}^{sec}}{1-P_{c}^{sec}})
       \end{equation}
\end{thm}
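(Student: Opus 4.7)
The plan is to factorize the $l$-bit mutual information by independence, compute the single-bit case exactly, and close with a scalar inequality on binary entropy. First, $S(x)$ and $S(y)$ are formed by concatenating $l$ independent draws of $h_{sec}$---each coordinate uses fresh inner LSH randomness and fresh universal-hash coefficients $r_1,\dots,r_{k+1}$. Conditioned on the fixed similarity $S_{x,y}$, the $l$ bit-pairs are therefore i.i.d., and additivity of mutual information over independent coordinates gives
\[
I\bigl(S(x);\, S(y) \,\bigl|\, S_{x,y}\bigr) \;=\; l\cdot I\bigl(h_{sec}(x);\, h_{sec}(y) \,\bigl|\, S_{x,y}\bigr),
\]
reducing the task to a single-bit bound.

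For one bit, the standard property of linear universal hashing modulo $2$ ensures that each marginal $h_{sec}(x)$, $h_{sec}(y)$ is uniform on $\{0,1\}$, so $H(h_{sec}(y)) = 1$. Combined with Theorem~\ref{th:collision}, which gives $\Pr(h_{sec}(x)=h_{sec}(y)) = P_c^{sec}$, the full $2\times 2$ joint distribution is pinned down (in particular $\Pr(h_{sec}(x)=h_{sec}(y)=0) = P_c^{sec}/2$), so $H(h_{sec}(y)\mid h_{sec}(x)) = H_b(P_c^{sec})$, where $H_b$ denotes the binary entropy. Hence
\[
I\bigl(h_{sec}(x);\, h_{sec}(y) \,\bigl|\, S_{x,y}\bigr) \;=\; 1 - H_b(P_c^{sec}) \;=\; D_{\mathrm{KL}}\bigl(\mathrm{Bern}(P_c^{sec})\,\big\|\, \mathrm{Bern}(\tfrac12)\bigr).
\]

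It remains to show $1 - H_b(p) \le (2p-1)\log\!\bigl(p/(1-p)\bigr)$ for $p \in [\tfrac12,1)$, which is where the real work of the proof lives. My preferred route is the integral representation
\[
D_{\mathrm{KL}}\bigl(p\,\big\|\,\tfrac12\bigr) \;=\; \int_{1/2}^{p} \frac{p - t}{t(1-t)}\, dt,
\]
combined with the pointwise bound $p - t \le 2p - 1$ on $t\in[\tfrac12,p]$ and the antiderivative $\int \frac{dt}{t(1-t)} = \log\!\bigl(t/(1-t)\bigr)$; this yields the inequality immediately, and multiplying by $l$ gives the stated bound. The main obstacle is simply selecting a clean form of this scalar inequality---a direct convexity argument on $g(p) = (1-p)\log p + p\log(1-p)$, verifying $g(\tfrac12) = -1$, $g'(\tfrac12) = 0$, and $g''(p) = -1/p - 1/(1-p) - 1/p^2 - 1/(1-p)^2 < 0$, so that $g$ attains its maximum $-1$ at $\tfrac12$, works equally well. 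The integral route, however, makes transparent the factor-of-$2$ slack in the paper's bound and keeps the argument one line long.
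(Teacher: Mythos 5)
Your proof follows essentially the same route as the paper's: decompose the $l$-bit mutual information into $l$ identical single-bit terms by independence, compute the single-bit term exactly as $P_{c}^{sec}\log(2P_{c}^{sec})+(1-P_{c}^{sec})\log(2(1-P_{c}^{sec}))=1-H_b(P_{c}^{sec})$ using the uniform marginals, and then bound it by $(2P_{c}^{sec}-1)\log\bigl(P_{c}^{sec}/(1-P_{c}^{sec})\bigr)$. The only substantive difference is that the paper asserts this final scalar inequality without justification, whereas you actually prove it (via the KL integral representation, with a correct convexity argument as backup), and your observation that the bound carries a factor-of-two slack is also accurate.
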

\begin{proof}
For simplicity let us call the $i^{th}$ bit of $h_{sec}(x)$, $u_i$ and $i^{th}$ bit of $h_{sec}(y)$, $u'_i$.  and derive the mutual information between these two bits conditioned on $S_{x,y}$ as follows:
\begin{align*}
I(u_i;u'_i|S_{x,y}) &\equiv \\
&\sum_{u_i,u'_i\in\{0,1\}} P(u_i,u'_i|S_{x,y}) log \frac{P(u_i,u'_i|S_{x,y})}{P(u_i|S_{x,y})P(u'_i|S_{x,y})} \\
&=P_{c}^{sec}log(2P_{c}^{sec})+(1-P_{c}^{sec})log(2(1-P_{c}^{sec}))\\
&<(2P_{c}^{sec}-1)log(\frac{P_{c}^{sec}}{1-P_{c}^{sec}})
\end{align*}
Since every bits of the binary embeddings are generated independently, the mutual information between $l$-bit embeddings is multiplied by $l$.
\end{proof}
Substituting $P_{c}^{sec}$ from Equation~\ref{eq:pcmin} and Equation~\ref{eq:pcsim}, the mutual information can be expressed as a function of Resemblance and Cosine similarities, respectively.
\begin{cor}
For secure MinHash we have:
\begin{align}\label{eq:imin}
I(h_{sec}^{min}(x);h_{sec}^{min}(y)| \mathcal{R}) <
\mathcal{R}^klog(\frac{1+\mathcal{R}^k}{1-\mathcal{R}^k})
\end{align}
and for Secure SimHash:
\begin{align}\label{eq:isim}
I(h_{sec}^{sign}(x);h_{sec}^{sign}(y)| \theta) < (1 - \frac{\theta}{\pi})^klog(\frac{1+(1 - \frac{\theta}{\pi})^k}{1-(1 - \frac{\theta}{\pi})^k})
\end{align}
\end{cor}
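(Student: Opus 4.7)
The plan is to observe that both parts of the corollary are pure algebraic consequences of Theorem~\ref{thm:total_mutual_info}, applied with $l=1$ because the notation $h_{sec}^{min}(x)$ and $h_{sec}^{sign}(x)$ refers to single-bit secure hashes rather than $l$-bit embeddings. All that remains is to substitute the family-specific collision probability $P_{c}^{sec}$ into the universal bound $(2P_{c}^{sec}-1)\log\bigl(P_{c}^{sec}/(1-P_{c}^{sec})\bigr)$ and to simplify.

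First, I would set $l=1$ in Theorem~\ref{thm:total_mutual_info} to obtain the per-bit form
\[
I(h_{sec}(x);h_{sec}(y)\mid S_{x,y}) \;<\; (2P_{c}^{sec}-1)\log\!\left(\frac{P_{c}^{sec}}{1-P_{c}^{sec}}\right),
\]
which is independent of the particular LSH family. Then for Secure MinHash I would substitute $P_{c}^{sec}=(\mathcal{R}^{k}+1)/2$ from Equation~\ref{eq:pcmin}; a one-line simplification gives $2P_{c}^{sec}-1=\mathcal{R}^{k}$ and $P_{c}^{sec}/(1-P_{c}^{sec})=(1+\mathcal{R}^{k})/(1-\mathcal{R}^{k})$, which is exactly Equation~\ref{eq:imin}. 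For Secure SimHash the same recipe with $P_{c}^{sec}=((1-\theta/\pi)^{k}+1)/2$ from Equation~\ref{eq:pcsim} yields $2P_{c}^{sec}-1=(1-\theta/\pi)^{k}$ and the corresponding log-ratio, producing Equation~\ref{eq:isim}.

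Since the steps are just substitution and elementary algebra, there is no genuine obstacle — the only thing to be careful about is matching the single-bit interpretation of the statement to the $l=1$ specialization of Theorem~\ref{thm:total_mutual_info}, and verifying that the denominator $1-P_{c}^{sec}=(1-\mathcal{R}^{k})/2$ (respectively $(1-(1-\theta/\pi)^{k})/2$) is strictly positive so that the logarithm is well-defined; this holds whenever $x\neq y$, i.e., $\mathcal{R}<1$ (respectively $\theta>0$), which is exactly the regime where a nontrivial information-leakage bound is meaningful.
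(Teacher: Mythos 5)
Your proposal is correct and follows exactly the paper's route: the paper likewise obtains the corollary by substituting $P_{c}^{sec}$ from Equations~\ref{eq:pcmin} and~\ref{eq:pcsim} into the bound of Theorem~\ref{thm:total_mutual_info} (with the single-bit, $l=1$ reading), and your algebra $2P_{c}^{sec}-1=\mathcal{R}^{k}$ and $P_{c}^{sec}/(1-P_{c}^{sec})=(1+\mathcal{R}^{k})/(1-\mathcal{R}^{k})$ is exactly the intended simplification. Your added observation that the logarithm requires $\mathcal{R}<1$ (resp.\ $\theta>0$) is a sensible well-definedness check that the paper leaves implicit.
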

As can be seen from Equations~\ref{eq:imin} and~\ref{eq:isim}, the mutual information drops rapidly (exponentially with $k$) to zero for $x$ and $y$ that have small similarity. Thus, for any two non-neighbor points (small $Sim(x,y)$) the generated bits behave like random bits revealing no information about each other. Obviously, $k=1$, which is the traditional choice for LSH, is not secure, as the bits contain significant mutual information. The choice of $k$ controls the decay of the mutual information and hence is the privacy knob (see Section~\ref{sec:practical} for details on how to tune this knob).

\subsection{Formalism of Privacy Budget}
\label{sec:practical}
Suppose, the application at hand considers any pair of points $x$ and $y$ with $Sim(x,y) < s_0$ as non-neighbors, for some problem-dependent choice of $s_0$. The application also specifies an $\epsilon$ such that the collision probability of any two non-neighbors should not exceed $\frac{1}{2} +\epsilon$ (be very close to half (random)). Forcing this condition ensures that whenever $Sim(x,y) < s_0$, the released bits cannot distinguish $x$ and $y$ with any randomly chosen pair. Formally,

\begin{defn}[\label{def:priva}
	\emph{\bf $\epsilon$-Secure Hash at Threshold $s_0$}]
	For any $x$ and $y$ with $Sim(x,y) \le s_0$,
	we call a $1$-bit hashing scheme $h_{sec}$ secure at threshold $s_0$ if the probability of bit-matches satisfies:
	\begin{align}\notag
	\frac{1}{2} \le Pr(h_{sec}(x) &= h_{sec}(y)) \le \frac{1}{2} +\epsilon.
	\end{align}
\end{defn}
Note that the expression of $\epsilon$-secure hash is not symmetric since the probability of collision is always greater than or equal to $\frac{1}{2}$ (see Equation~\ref{eq:pcsec}).

We show that for any $\epsilon$-secure hash function, the mutual information in the bits of non-neighbor pairs is bounded.

\begin{thm}
[{\bf Information Bound}] For any $1$-bit $\epsilon$-secure hash function at threshold $s_0$, the mutual information between $h(x)$ and $h(y)$, for any pair with $Sim(x,y) \le s_0$, is bounded as
\begin{equation}\label{eq:infoSec}
I(h(x);h(y)) \le 2\epsilon\log{\frac{1+2\epsilon}{1 -2\epsilon}}
\end{equation}
\end{thm}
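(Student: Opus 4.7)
The plan is to reduce the statement to a monotonicity argument applied to the per-bit mutual-information bound already established in Theorem~\ref{thm:total_mutual_info}. Specializing that theorem to the single-bit case ($l = 1$), and using the fact that the universal hashing scheme $h_{univ}$ makes the marginals of $h(x)$ and $h(y)$ uniform on $\{0,1\}$, one obtains
\[
I(h(x); h(y)) \;<\; (2P_c - 1)\,\log\frac{P_c}{1 - P_c},
\]
where $P_c = \Pr(h(x) = h(y))$ is the collision probability. This already packages the mutual-information calculation in a closed form, so the remaining work is purely to maximize the right-hand side over the range of $P_c$ allowed by $\epsilon$-security.

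Next, I would set $g(p) = (2p - 1)\log\frac{p}{1 - p}$ on $[\tfrac{1}{2}, 1)$ and show it is monotonically non-decreasing. A direct differentiation gives
\[
g'(p) = 2\log\frac{p}{1-p} + \frac{2p - 1}{p(1-p)},
\]
and both summands are non-negative for $p \ge \tfrac{1}{2}$ (with $g(\tfrac{1}{2}) = 0$ and $g'(\tfrac{1}{2}) = 0$), so $g$ is non-decreasing on the relevant interval. This is the only slightly non-routine step, but it is entirely elementary.

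Finally, by Definition~\ref{def:priva}, the hypothesis $\mathrm{Sim}(x,y) \le s_0$ forces $\tfrac{1}{2} \le P_c \le \tfrac{1}{2} + \epsilon$. Monotonicity of $g$ then yields
\[
I(h(x); h(y)) \;<\; g(P_c) \;\le\; g\!\left(\tfrac{1}{2} + \epsilon\right) \;=\; 2\epsilon\,\log\frac{1 + 2\epsilon}{1 - 2\epsilon},
\]
which is exactly the claimed information-theoretic bound in~(\ref{eq:infoSec}).

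The main obstacle, such as it is, is verifying the monotonicity of $g$ cleanly; once that is in hand, the proof is just Theorem~\ref{thm:total_mutual_info} plus substitution of the worst-case collision probability permitted by $\epsilon$-security. I would also briefly remark that the argument makes essential use of the uniform marginals of $h$ (inherited from $h_{univ}$), so that the joint distribution of $(h(x), h(y))$ is completely determined by the single parameter $P_c$; without that, $I(h(x); h(y))$ could not be bounded from $P_c$ alone.
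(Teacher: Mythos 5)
Your proposal is correct and follows essentially the same route as the paper, whose proof is simply the one-line remark that the bound ``follows from Theorem~\ref{thm:total_mutual_info} and Definition~\ref{def:priva}''; you have merely filled in the implicit step, namely the monotonicity of $g(p)=(2p-1)\log\frac{p}{1-p}$ on $[\tfrac{1}{2},1)$ and the substitution $P_c=\tfrac{1}{2}+\epsilon$. Your closing remark about uniform marginals is a fair observation about an assumption the paper also leaves implicit, but it does not change the argument.
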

\begin{proof}
Follows from Theorem~\ref{thm:total_mutual_info} and Definition~\ref{def:priva}.
\end{proof}

In triangulation attack, we have access to $m$ attributes $y_i$s: $Y = {y_1, y_2,... ,y_m}$, and their corresponding hashes $h(y_i)$s. Assuming $y_i$s are independent, we can bound the mutual information about any target $x$ conditional on knowing $y_i$'s and $h_i$s as follows:
\begin{thm}
	 For any $1$-bit $\epsilon$-secure hash function at threshold $s_0$, the mutual information between $h(x)$ and \\
	 $\{h(y_1), h(y_2)... h(y_m)\}$, for any pair with $Sim(x,y_i) \le s_0$ and any pair of $y_i, y_j$ are independent, is bounded as
	\begin{equation}\label{eq:infoSec2}
		I(h(x);h(y_1) h(y_2)... h(y_m)) \le 2m\epsilon\log{\frac{1+2\epsilon}{1 -2\epsilon}}
	\end{equation}
\end{thm}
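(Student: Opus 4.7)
The plan is to combine the chain rule for mutual information with the independence assumption on the $y_i$'s and then invoke the single-pair bound from the preceding theorem (Equation~\ref{eq:infoSec}). First, I would decompose the joint mutual information:
\begin{equation*}
I(h(x); h(y_1), \ldots, h(y_m)) = \sum_{i=1}^{m} I\bigl(h(x); h(y_i) \mid h(y_1), \ldots, h(y_{i-1})\bigr).
\end{equation*}

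Next, I would argue that the hypothesis ``$y_i$ and $y_j$ are independent'' together with the fact that the secure hash mixes fresh randomness into each bit of each $h(y_i)$ implies conditional independence of the $h(y_i)$'s given $h(x)$. Under this conditional independence, a short entropy computation (expanding each side as $H(h(y_i)\mid h(y_{<i})) - H(h(y_i)\mid h(x),h(y_{<i}))$ and collapsing the conditioning sets) yields
\begin{equation*}
I\bigl(h(x); h(y_i) \mid h(y_1), \ldots, h(y_{i-1})\bigr) \le I(h(x); h(y_i)).
\end{equation*}
Each of the $m$ unconditional terms is at most $2\epsilon\log\tfrac{1+2\epsilon}{1-2\epsilon}$ by the preceding theorem, whose hypothesis $Sim(x,y_i)\le s_0$ is assumed. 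Summing the $m$ identical bounds gives the claimed inequality $2m\epsilon\log\tfrac{1+2\epsilon}{1-2\epsilon}$.

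The main obstacle is making the conditional-independence step watertight. All of the $h(y_i)$'s are computed with the \emph{same} underlying LSH seeds (the permutations for MinHash, the projection vectors for SimHash, and the universal-hash coefficients), so marginally they are coupled, and one must verify that the only coupling between $h(y_i)$ and $h(y_j)$ for $i\ne j$ is already captured once we condition on $h(x)$. I would address this by carrying the argument out at the bit level: each of the $l$ bits in a secure embedding is generated from an independent draw from $\mathcal{H}$ together with an independent draw of the universal-hash coefficients, so across bits everything factorizes; within a single bit, independence of the input vectors $y_i$ propagates through the deterministic hash to independence of the output bits conditional on the seed. A fall-back plan, if this factorization argument turns out to be too delicate, is to weaken the claim slightly by bounding the joint mutual information by $m$ times the maximum single-pair bound via Han's inequality or via a direct sub-additivity inequality for mutual information under the specified independence structure, which still yields the stated constant $2m\epsilon\log\tfrac{1+2\epsilon}{1-2\epsilon}$.
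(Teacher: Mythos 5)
Your proposal is correct in substance but takes a genuinely different route from the paper. You decompose via the chain rule, $I(h(x); h(y_1),\ldots,h(y_m)) = \sum_i I(h(x); h(y_i)\mid h(y_{<i}))$, and use conditional independence of the $h(y_i)$'s given $h(x)$ to drop the conditioning and reduce to $m$ copies of the single-pair bound; this is the standard sub-additivity argument for mutual information. The paper instead writes an inclusion--exclusion (interaction-information) expansion, $I(h(y_1)\cdots h(y_m); h(x)) = \sum_{T} (-1)^{|T|} I(T; h(x))$ over subsets $T$, and asserts that the cross terms vanish ``because the $y_i$ are independent''; its notation is not even internally consistent (it introduces $\mathcal{V} = n$ and sums over $T \subseteq \{2,\ldots,n\}$ where $n$ is never tied to $m$), so your version is both more standard and more nearly rigorous. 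Both arguments ultimately hinge on the same unproven step, which you have correctly isolated as the main obstacle: independence of the \emph{data points} $y_i$ does not by itself give independence (conditional or otherwise) of the \emph{hashes} $h(y_i)$, because every embedding is computed with the same hidden seeds, and conditioning on $h(x)$ partially reveals the seed rather than decoupling the outputs. Your bit-level factorization handles independence across the $l$ bits but, within one bit, establishes independence only conditional on the seed, not conditional on $h(x)$, which is what the chain-rule step needs. The paper simply does not engage with this issue at all, so your proposal is no weaker than the published proof and is more honest about where the genuine difficulty lies; closing it properly would require working with the joint distribution over the seed and the inputs, or retreating to the pairwise collision-probability model the paper implicitly uses throughout Section 5.3.3.
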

\begin{proof} 	
	Define subsets $\mathcal {T}\subseteq {\mathcal {V}}$, where $\mathcal {V} =n.$
	\begin{equation}\label{eq:infoSec2}
	\begin{split}
	I(h(x);h(y_1) h(y_2)... h(y_m)) = &I(h(y_1) h(y_2)... h(y_m); h(x)) \\
	 = & \sum_{T\subseteq {2,...,n}} (-1)^{|T|} I( T; h(x))\\
	 \le &2m\epsilon\log{\frac{1+2\epsilon}{1 -2\epsilon}}
	 \end{split}
	\end{equation}
	Note that the mutual information of any $y_i, y_j$ pair is 0 because they are independent.
\end{proof}
Thus, for small enough $\epsilon$, it is impossible to get enough information about any non-neighbor $x$ via triangulation. We verify this observation empirically in the experiments. Next, we show that Secure LSH can always be made $\epsilon$-secure hash function for any $\epsilon$ using an appropriate choice of $k$.
\begin{thm}
	Any Secure LSH, $h_{sec}$, is also an $\epsilon$-secure hash function at any given threshold $s_0$, for all
	\begin{equation}
	k \ge \bigg\lceil \frac{\log{2\epsilon}}{\log{\big(Pr(h(x) = h(y)| Sim(x,y) = s_0)\big)}} \bigg\rceil,
	\end{equation}
	where $\lceil . \rceil$ is the ceiling operation. Here, $h(x)$ is the original hash function from which the $h_{sec}$ is derived.
\end{thm}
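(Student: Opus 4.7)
The plan is to reduce the claim directly to Theorem~\ref{th:collision}, which gives the closed form $P_{c}^{sec} = \tfrac{1}{2} + \tfrac{1}{2} P_{c}^k$ for the collision probability of $h_{sec}$ in terms of the underlying LSH collision probability $P_c = Pr(h(x)=h(y))$. The lower bound $P_{c}^{sec} \ge \tfrac{1}{2}$ in Definition~\ref{def:priva} is then automatic because $P_c \ge 0$, so the entire content of the theorem lies in establishing the upper bound $P_{c}^{sec} \le \tfrac{1}{2} + \epsilon$ whenever $Sim(x,y) \le s_0$.

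First I would translate the target upper bound into a condition on $k$: substituting the closed form, $\tfrac{1}{2} + \tfrac{1}{2}P_{c}^k \le \tfrac{1}{2} + \epsilon$ is equivalent to $P_{c}^k \le 2\epsilon$. Taking logarithms (noting $P_c \in (0,1]$, so $\log P_c \le 0$, which flips the inequality when dividing) gives
\begin{equation*}
k \;\ge\; \frac{\log(2\epsilon)}{\log P_c}.
\end{equation*}
Next I would use the monotonicity property of LSH families recorded in Equation~\ref{eq:monotonic}: $P_c$ is a monotonically increasing function of $Sim(x,y)$. Consequently, over all pairs $(x,y)$ with $Sim(x,y) \le s_0$, the quantity $P_c$ attains its supremum at $Sim(x,y) = s_0$. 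Because $\log$ is monotone and $\log P_c < 0$ in the regime of interest, the right-hand side $\log(2\epsilon)/\log P_c$ is also maximized at $Sim(x,y) = s_0$ (for $\epsilon < 1/2$, $\log(2\epsilon) < 0$ so the ratio is positive and increases as $P_c$ increases toward $1$). Thus enforcing the inequality at the threshold $s_0$ uniformly enforces it for all less-similar pairs.

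Combining, it suffices that
\begin{equation*}
k \;\ge\; \bigg\lceil \frac{\log(2\epsilon)}{\log\bigl(Pr(h(x)=h(y)\,|\,Sim(x,y)=s_0)\bigr)} \bigg\rceil,
\end{equation*}
which is exactly the stated bound; the ceiling appears because $k$ must be a positive integer. I would close by substituting back to confirm $P_c^{sec} \le \tfrac{1}{2} + \epsilon$ and noting that the lower bound of Definition~\ref{def:priva} holds trivially. The only subtlety worth being careful about is the sign handling when dividing by $\log P_c$; once the monotonicity of $P_c$ in $Sim(x,y)$ is invoked to pin the worst case at $s_0$, the rest is a direct algebraic manipulation of Theorem~\ref{th:collision}, so I do not anticipate a genuine obstacle beyond stating the monotonicity reduction cleanly.
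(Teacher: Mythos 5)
Your proof is correct and follows the same route the paper intends: the paper's own proof is a one-line remark (``follows from the definition of $\epsilon$-secure hashing and the fact that $h$ is an LSH''), and your argument simply supplies the omitted algebra, substituting $P_c^{sec}=\tfrac{1}{2}+\tfrac{1}{2}P_c^k$ from Theorem~\ref{th:collision}, handling the sign flip when dividing by $\log P_c$, and invoking the monotonicity of $P_c$ in $Sim(x,y)$ (Equation~\ref{eq:monotonic}) to pin the worst case at $s_0$. If anything, you are more careful than the paper, which cites only Definition~\ref{def:lsh} even though the reduction to the single threshold $s_0$ genuinely requires the monotone collision-probability property rather than the two-threshold LSH definition alone.
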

\begin{proof}
Follows from the definition of $\epsilon$-secure hashing added with fact that $h(x)$ satisfies Definition~\ref{def:lsh}.
\end{proof}

In order to obtain $\epsilon$-secure MinHash, we need $k = \bigg\lceil \frac{\log{2\epsilon}}{\log{s_0}} \bigg\rceil$. For secure SimHash, we need to choose $k = \bigg\lceil \frac{\log{2\epsilon}}{\log{\big(1- \frac{cos^{-1}(s_0)}{\pi}\big)}} \bigg\rceil$. To get a sense of quantification, if we consider $s_0 = 0.75$ (high similarity) and $\epsilon = 0.05$, then we have $k=8$ (MinHash) and $k=12$ (SimHash). 

\subsection{Utility-Privacy Trade-off of Secure LSH}
As mentioned in Section ~\ref{lsh}, the querying time and space for approximate Near-Neighbor search are directly quantified by $\rho = \frac{\log p_1}{ \log p_2} < 1$. The space complexity grows as $n^{1+\rho}$, while the query time grows as $n^{\rho}$, where $n$ is the size of the dataset. Thus, smaller $\rho$ indicates better theoretical performance.
The collision probability of our secured LSH is $P_{c}^{sec} = \frac{P_{c}^k + 1}{2}$. The new $\rho'$ for Secure LSH would be $\frac{\log \frac{P_{1}^k + 1}{2}}{\log \frac{P_{2}^k + 1}{2}}$.
\begin{thm}
	$\rho'$  is monotonically increasing with $k$.
\end{thm}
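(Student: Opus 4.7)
The plan is to show $\frac{d\rho'}{dk} > 0$ directly. Write $N(k) = \log\frac{P_1^k+1}{2}$ and $D(k) = \log\frac{P_2^k+1}{2}$; both are negative for $k > 0$, and so are $N'$ and $D'$ (since $\log P_i < 0$). Then $\rho' = N/D$ and $\frac{d\rho'}{dk} = (N'D - ND')/D^2$, so (after dividing by $ND > 0$) its sign matches that of $N'/N - D'/D$. In other words, the task reduces to showing that the logarithmic derivative $\frac{d}{dk}\log|N|$ is a strictly increasing function of the underlying collision probability $p \in \{P_1, P_2\}$.

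Making the change of variables $u = p^k$ (so $\log p = (\log u)/k$) converts this logarithmic derivative into $h(u)/k$, where
\[
h(u) \;=\; \frac{u \log u}{(u+1)\log((u+1)/2)}.
\]
Both numerator factors and both denominator factors are negative on $(0,1)$, so $h > 0$. Because $p \mapsto p^k$ is strictly increasing for fixed $k > 0$, the desired inequality $h(P_1^k)/k > h(P_2^k)/k$ is equivalent to $h$ being strictly increasing on $(0,1)$, which becomes the substantive step.

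By the quotient rule and cancellation, the sign of $h'(u)$ matches that of
\[
G(u) \;:=\; \log\frac{u+1}{2}\,(\log u + u + 1) \;-\; u\log u,
\]
which satisfies $G(1) = 0$ and $G(u) \to +\infty$ as $u \to 0^+$, so it suffices to prove $G$ is monotonically decreasing on $(0,1)$. A short computation shows that $G'(u) < 0$ on $(0,1)$ is equivalent to
\[
A(u) \;:=\; u^2 \log u \;-\; (u+1)^2\log\frac{u+1}{2} \;>\; 0 \qquad \text{for } u \in (0,1).
\]

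I would close the argument via a concavity argument on $A$. Differentiating twice yields the compact form $A''(u) = 2\log\frac{2u}{u+1}$, which is strictly negative on $(0,1)$ since $2u < u+1$ there, so $A$ is concave on $[0,1]$. Combined with the boundary values $A(0^+) = \log 2 > 0$ and $A(1) = 0$, concavity forces $A(u) \ge \log 2 \cdot (1-u) > 0$ on $[0,1)$, completing the chain $A > 0 \Rightarrow G' < 0 \Rightarrow G > 0 \Rightarrow h' > 0 \Rightarrow \rho'$ is increasing. The main obstacle is simply keeping the nested logarithms under control without algebraic errors; the collapse of $A''$ to $2\log\frac{2u}{u+1}$ is the key identity that keeps the finishing estimate compact, and without that simplification a brute-force analysis of $G'$ gets unpleasant.
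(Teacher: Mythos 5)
Your proof is correct, and it is considerably more complete than the paper's. The paper's entire ``proof'' is the quotient-rule formula for $\frac{d\rho'}{dk}$ followed by a bare ``$>0$''; since that expression is a \emph{difference of two positive terms} (the first has a negative numerator over a negative denominator, the second a positive numerator over a positive denominator), the asserted positivity is not self-evident and the paper never justifies it. You begin from the same quotient-rule decomposition but then actually establish the inequality: dividing $N'D-ND'$ by $ND>0$ reduces the claim to monotonicity of the logarithmic derivative $N'/N$ in the collision probability; the substitution $u=p^k$ converts this to monotonicity of $h(u)=\frac{u\log u}{(u+1)\log((u+1)/2)}$ on $(0,1)$; and the chain $A''(u)=2\log\frac{2u}{u+1}<0$, $A(0^+)=\log 2$, $A(1)=0$, hence $A>0$, hence $G'<0$, hence $G>0$ (using $G(1)=0$ and $G(0^+)=+\infty$), hence $h'>0$, is sound --- I verified the key identities ($f'g-fg'$ does collapse to $G$ because $(u+1)\log u-u\log u=\log u$; multiplying $G'$ by $u(u+1)$ gives exactly $-A(u)$; and $A''$ is as stated). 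What your route buys is an actual proof of the step the paper merely asserts; the paper's version buys only brevity. The one hypothesis you should state explicitly is $0<P_2<P_1<1$, which guarantees $u=P_i^k\in(0,1)$, $ND>0$, and $P_1^k>P_2^k$; this is the standard LSH regime.
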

\begin{equation}
	\frac{d \rho'}{dk} =  \frac{p_1^k\ln\left(p_1\right)}{\ln\left(\frac{p_2^k+1}{2}\right)\left(p_1^k+1\right)}-\dfrac{\ln\left(\frac{p_1^k+1}{2}\right)p_2^k\ln\left(p_2\right)}{\ln^2\left(\frac{p_2^k+1}{2}\right)\left(p_2^k+1\right)} >0
\end{equation}
Therefore, when we increase $k$, we get the privacy at the cost of reduced space and query time. The quantification of this tradeoff is given as $\frac{\log \frac{P_{1}^k + 1}{2}}{\log \frac{P_{2}^k + 1}{2}}$

\section{Hiding the Mechanism of S(.)}
\label{sec:secure_hash}
We are now ready to describe the final piece of our protocol. We describe in detail how we can reasonably hide the random seeds inside of $S(.)$ from the users in addition to {\it both servers}. To compute the hash of the client's input, we need random seeds (e.g., for MinHash, random seeds are the random permutations and for SimHash, they are random vectors used in the projection step). Since these random seeds should be equal for all clients, we cannot let each client generate her seeds independently. Seeds should be chosen in a consistent fashion. However, seeds should not be revealed to any server, otherwise, they might be used to reconstruction the secret attributes.
As a result, we need to design a mechanism such that no party knows the seeds, which is an important and yet difficult task.
To compute $S(x)$ securely without revealing the actual random seeds to any party, at least two different (non-colluding) servers need to be deployed. While we do not trust either server, we require that two servers do not collude.


\begin{figure}[ht]
       \centering
       \includegraphics[width=\columnwidth]{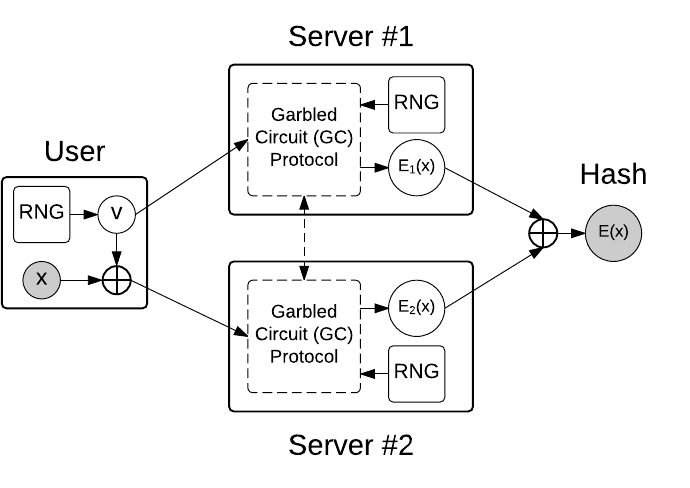}
       \caption{Global flow of black-box hash computation. RNG stands for Random Number Generator.
       } 
       \label{fig:Arch}
\end{figure}

In the initial phase, each server generates its own version of the seeds randomly. Whenever a client wants to compute a secure hash, $S(x)$, she generates a random $D$ dimensional vector $v$ (same dimensionality as her input value) and then XORs this vector with $x$ (resulting in $x \oplus v$). She then sends $v$ to server \#1 and $x \oplus v$ to server \#2. The term $x \oplus v$ is One-Time Pad (OTP) encryption of $x$ using $v$ as the pad and we denote it as $Enc_v(x)$. Given this information and the two initial random seeds, then both servers engage in a two-party secure computation. Here, we utilize Garbled Circuit (GC) protocol in order to compute $S(x)$.
The GC protocol is one of the generic secure function evaluation protocols that allows two parties to jointly compute a function on their inputs while keeping each input private to their respective owners. In this protocol, the function that is evaluated securely has to be described as a Boolean circuit. The computation and communication complexity of this algorithm is proportional to the number of non-XOR gates in the circuit.

The global flow of our approach is illustrated in Figure~\ref{fig:Arch}.
Server \#1 inputs $v$ and server \#2 inputs $Enc_v(x)$ ($x \oplus v$) to the GC protocol. In addition, each server inputs her random seeds to the GC protocol. {\it Actual seeds} used for generating the hash of $x$ are based on the two random seeds from two servers and are generated using the Boolean circuit that is used inside the GC protocol. In our case, the Boolean circuit is the secure hash computation suggested in Section~\ref{sec:secLSH}. For this reason, we have designed the corresponding Boolean circuits for securely computing secure MinHash and SimHash. The internal architecture of the two circuits are described in Appendix~\ref{ssec:circuit_minhash} and Appendix~\ref{ssec:circuit_simhash}, respectively.
After two servers run the GC protocol, they both acquire secret shared values of the hash ($S_1(x)$ and $S_2(x)$) and server \#1 needs to XOR the two values to get the real hash ($S(x)$). RNG stands for Random Number Generator.
The security proof of our proposed approach is given in Proposition~\ref{th:approachSecure} in Appendix~\ref{sec:sbhc}.
We utilize the recent advances which make hashing algorithmically faster~\cite{Proc:OneHashLSH_ICML14,Proc:Shrivastava_UAI14}.



The above procedure is called {\it XOR-sharing} technique and is secure in HbC attack model because: (i) server \#1 receives nothing but a true random number which contains no information about $x$ and (ii) server \#2 receives the encryption of the message $x$ using $v$ as the encryption pad and is perfectly secure~\cite{paar2009understanding}. Since both servers are assumed to not collude, they cannot infer any information about the user's input $x$.
The theory behind the GC protocol guarantees that neither of the parties that execute the protocol can infer any information about the intermediate values~\cite{yao1986generate}. Since the actual random seeds used to compute $S(x)$ is created by the GC protocol as an intermediate value, none of the servers nor the users know the value of true seeds and hence our protocol is secure.

\section{Noise Addition Methods and Their Poor Utility-Privacy Trade-off}\label{sec:noisePriv}
Obfuscating information by adding noise is one of the most popular techniques for achieving privacy. By adding sufficient noise to the hashes, one can construct $\epsilon$-secure scheme satisfying Definition~\ref{def:priva}. However, any protocol based on adding a noise will obfuscate the information uniformly in every bit, which will significantly affect the utility of near-neighbor search. We elaborate this poor utility-privacy trade-off. This is not the first time when such poor utility-privacy trade-off is being observed by adding a noise~\cite{fredrikson2014privacy}.

Following popular noise addition mechanism~\cite{kenthapadi2012privacy}, in order to achieve the requirement in Definition~\ref{def:priva}, we can choose to corrupt $1$-bit LSH $h(x)$ with a random bit, with probability $f$. Formally, the generated hash function is
$$h_{corr}(x) = \begin{cases}
random\_bit, \ \ \text{with probability f}\\
h(x), \ \ \ \ \ \ \ \ \ \ \ \text{with probability 1-f}
\end{cases}$$

\begin{thm}
	The new collision probability after this corruption, for any $x$ and $y$, is given by:
	\begin{align}\label{eq:noiseCollProb}\notag
	P(h_{corr}(x) &= h_{corr}(y)) \\ &= (1 - f)(Pr(h_{1bit}(x) = h_{1bit}(y)) + \frac{f}{2}.
	\end{align}
\end{thm}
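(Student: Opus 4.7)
My plan is to condition on the outcome of the randomization inside $h_{\text{corr}}$ and apply the law of total probability. Introduce Bernoulli indicators $C_x, C_y$ with $\Pr(C_x = 1) = \Pr(C_y = 1) = f$ that mark when the ``random bit'' branch of the definition is taken for inputs $x$ and $y$, respectively. The implicit independence hypotheses in the noise model are that the fresh uniform bit used on a corruption event is independent of $h_{1bit}(x)$, $h_{1bit}(y)$, and of any other fresh uniform bit; these are the only probabilistic facts I will need beyond the definition of the LSH collision probability $P_c \equiv \Pr(h_{1bit}(x) = h_{1bit}(y))$.

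With these indicators fixed, the event $\{h_{\text{corr}}(x) = h_{\text{corr}}(y)\}$ splits cleanly over the joint outcomes of $(C_x, C_y)$. In the ``no corruption'' branch the two outputs are literally $h_{1bit}(x)$ and $h_{1bit}(y)$, so the conditional collision probability equals $P_c$. In any branch in which at least one output is replaced by a fresh uniform bit, the key observation is that the fresh bit is independent of whatever bit the other side produces; a uniform bit matches any independent $\{0,1\}$-valued random variable (constant or uniform) with probability exactly $\tfrac{1}{2}$. This is the reduction that kills the correlation with the original hash as soon as corruption fires on either side.

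Combining via total probability then reduces the claim to $\Pr(h_{\text{corr}}(x) = h_{\text{corr}}(y)) \;=\; \Pr(\text{no corruption}) \cdot P_c \;+\; \Pr(\text{some corruption}) \cdot \tfrac{1}{2}$, and the stated formula follows by substituting the appropriate weights for the corruption events and algebraically collecting terms. I do not expect any genuine obstacle; the proof is a short casework computation once the independence structure is written down. The only point that deserves explicit care is the verification of the ``$\tfrac{1}{2}$'' conditional probability in the subcase where \emph{both} sides are corrupted (and so one is comparing two independent uniform bits), since it is tempting to overlook that case, but it is immediate from independence of the two fresh coins. A full write-up would tabulate the corruption cases, weight each by the corresponding probability, and simplify to match the right-hand side of the theorem exactly.
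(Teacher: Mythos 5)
Your decomposition over corruption indicators is the right general strategy, but the independence hypothesis you impose on them is exactly what makes your computation disagree with the statement you are trying to prove. With $C_x,C_y$ independent Bernoulli$(f)$, your casework gives
\begin{equation*}
\Pr\bigl(h_{corr}(x)=h_{corr}(y)\bigr)=(1-f)^2P_c+\bigl(1-(1-f)^2\bigr)\tfrac{1}{2},
\end{equation*}
which differs from the claimed $(1-f)P_c+\tfrac{f}{2}$ by $f(1-f)\bigl(P_c-\tfrac{1}{2}\bigr)$, a quantity that is nonzero whenever $0<f<1$ and $P_c\neq\tfrac{1}{2}$. So the final step of your plan, ``substituting the appropriate weights and collecting terms,'' does not close: carrying out the algebra lands you on a different formula, and downstream on $\tfrac{R(1-f)^2+1}{2}$ rather than the paper's $\tfrac{R(1-f)+1}{2}$ for corrupted $1$-bit MinHash. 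You correctly flagged the both-corrupted subcase as the one deserving care, but the issue is not the conditional probability there (your $\tfrac{1}{2}$ is right); it is that weighting the no-corruption branch by $(1-f)^2$ instead of $(1-f)$ changes the answer.

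The stated formula is only consistent with a model in which the corruption decision is a \emph{single} coin attached to the hash function (i.e., to the bit position) and shared by all inputs: with probability $1-f$ the bit is the genuine LSH bit for both $x$ and $y$ (conditional collision probability $P_c$), and with probability $f$ both outputs are fresh independent uniform bits (conditional collision probability $\tfrac{1}{2}$), giving $(1-f)P_c+\tfrac{f}{2}$ in one line. The paper states this theorem without proof, so there is no argument to compare against, but to make your write-up match the claim you must replace the two independent indicators $C_x,C_y$ by one shared indicator $C$ --- or else explicitly note that under per-evaluation independent corruption the theorem as stated is incorrect. Your observation that a fresh uniform bit matches anything independent of it with probability $\tfrac{1}{2}$ is sound and still does all the work in the corrupted branch once the model is fixed.
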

Let us define $P(s) = Pr(h_{1bit}(x) = h_{1bit}(y) | Sim(x,y)= s)$. Using this quantity, it is not difficult to show:
\begin{thm}
	$h_{corr}$ is $\epsilon$-secure at threshold $s_0$, iff
	\begin{align}\notag
	(1 - f)P(s_0) + 0.5f \le 0.5 + \epsilon; \  \  \
	f \ge 1 - \frac{\epsilon}{\bigg(P(s_0) -\frac{1}{2}\bigg)}.
	\end{align}
\end{thm}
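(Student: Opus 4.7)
The plan is to reduce the $\epsilon$-security condition from Definition~\ref{def:priva} to a one-variable inequality in $f$ by using the collision-probability identity established in the preceding theorem (Equation~\ref{eq:noiseCollProb}) together with the monotonicity of the underlying LSH collision probability.

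First I would unpack the ``iff'' by rewriting $\epsilon$-security at threshold $s_0$ as the requirement that $P(h_{corr}(x)=h_{corr}(y)) \le \tfrac{1}{2}+\epsilon$ for every pair $x,y$ with $Sim(x,y)\le s_0$, while the lower bound $\tfrac{1}{2}\le P(h_{corr}(x)=h_{corr}(y))$ is automatic since $P(s)\ge \tfrac{1}{2}$ for any 1-bit LSH (cf.\ Equation~\ref{eq:1bitprob}) and the noise step only mixes with a fair coin. Substituting the identity $P(h_{corr}(x)=h_{corr}(y)) = (1-f)P(s) + f/2$ transforms the condition into $(1-f)P(s)+\tfrac{f}{2}\le \tfrac{1}{2}+\epsilon$ for all $s\le s_0$.

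Next I would invoke the fact that $P(s)$ is monotonically increasing in the similarity $s$ (this is precisely the LSH property, and in particular the monotonicity assumption of Equation~\ref{eq:monotonic}, combined with the 1-bit rehashing of Equation~\ref{eq:1bitprob} which preserves monotonicity). This reduces the universal quantifier over all $s\le s_0$ to the single worst-case evaluation at $s=s_0$, giving the first displayed inequality of the theorem. Algebraically isolating $f$ then yields $f\ge 1-\epsilon/(P(s_0)-\tfrac{1}{2})$, which is the second displayed inequality; the conversion is an equivalence because $P(s_0)-\tfrac{1}{2}>0$ whenever $s_0$ is in the nontrivial range (if $P(s_0)=\tfrac{1}{2}$ the hash is already $\epsilon$-secure for any $f$ and the bound is vacuous).

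The only mildly subtle step is justifying that the \emph{maximum} collision probability under the corrupted hash is attained at $s=s_0$ rather than at some internal point; this is where monotonicity of $P(s)$ in the base LSH (and the fact that $(1-f)\ge 0$, so the linear combination in Equation~\ref{eq:noiseCollProb} preserves monotonicity) is essential. Apart from that, the argument is a direct substitution followed by rearrangement, so I would not expect any serious obstacle, and the two inequalities stated in the theorem are literally the pre- and post-isolation forms of the same condition.
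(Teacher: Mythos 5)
Your proof is correct and matches what the paper intends: the theorem is stated without an explicit proof (prefaced only by ``it is not difficult to show''), and the intended argument is exactly your direct substitution of the corrupted collision probability into Definition~\ref{def:priva}, followed by the rearrangement $(1-f)\bigl(P(s_0)-\tfrac{1}{2}\bigr)\le\epsilon$. Your additional care about the lower bound being automatic and about reducing the quantifier over all $s\le s_0$ to the worst case $s=s_0$ via monotonicity of $P(s)$ is a welcome bit of rigor that the paper glosses over.
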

For $1$-bit MinHash with corruption, the collision probability boils down to $\frac{R(1-f)+1}{2}$. Thus, $f$ only changes the slope of collision probability curve.
To ensure $\epsilon$-secure hash at similarity $s_0$ threshold, we must have
\begin{align}
f &\ge 1 - \frac{2\epsilon}{s_0}.
\end{align}

\begin{figure*}[t]
       \centering
       \mbox{\hspace{-0.12in}
              \includegraphics[width=0.5\columnwidth]{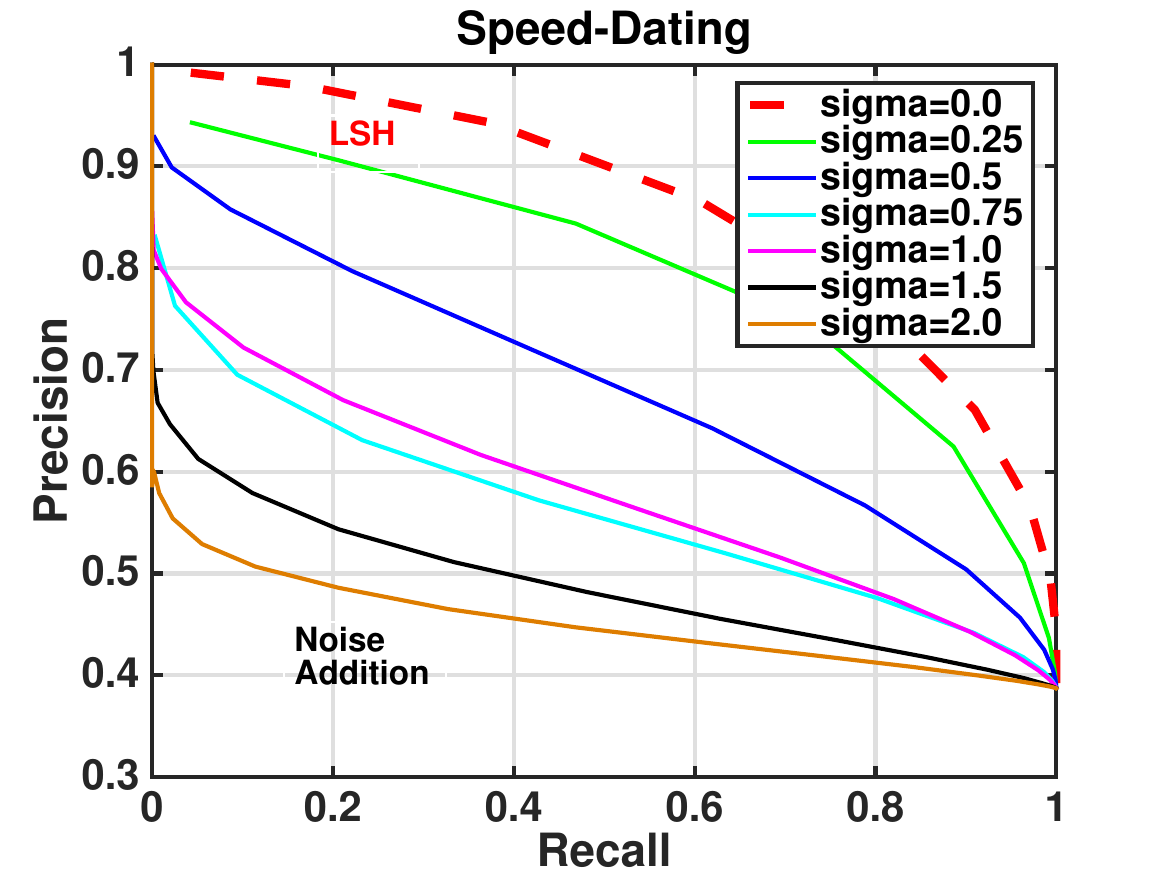}
              \includegraphics[width=0.5\columnwidth]{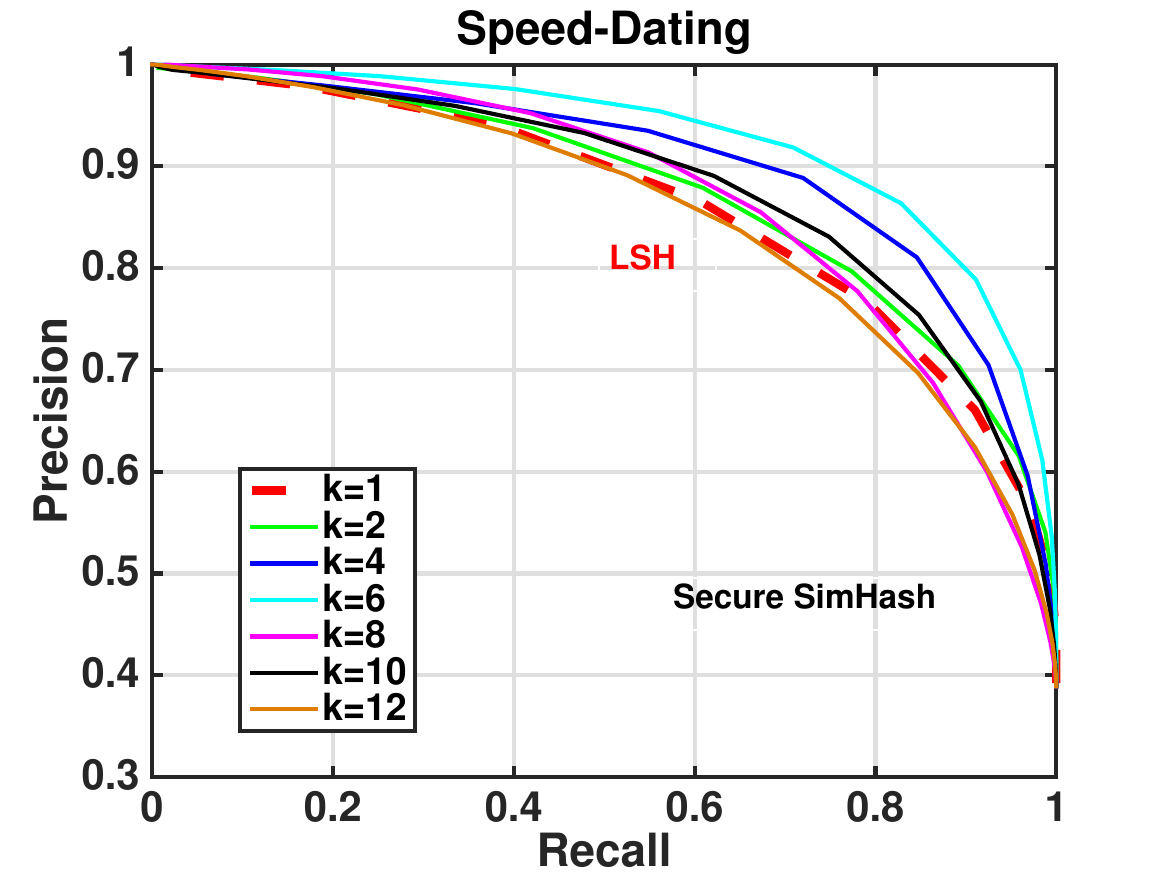}
       }
       \mbox{\hspace{-0.12in}
              \includegraphics[width=0.5\columnwidth]{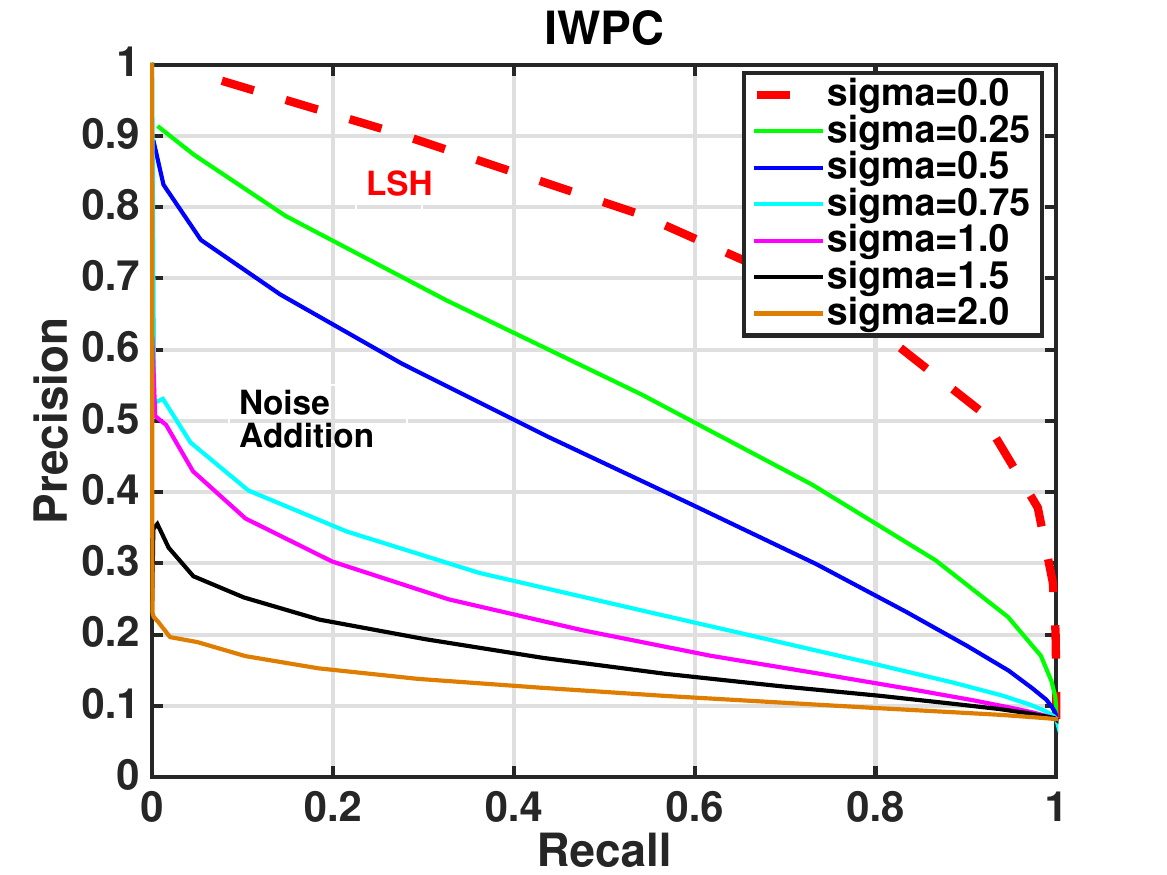}
              \includegraphics[width=0.5\columnwidth]{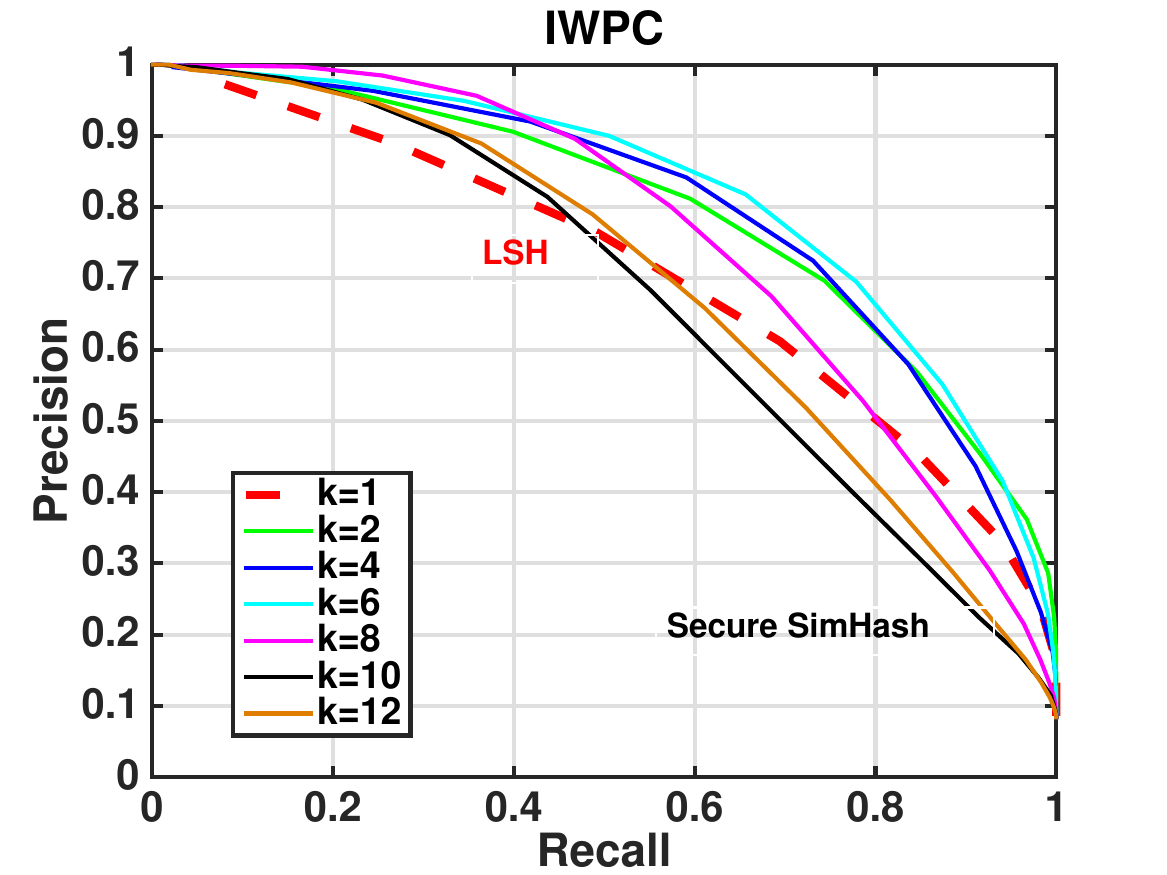}
       }
       \caption{ {\bf Utility-Privacy Tradeoff:} The plots represent the precession recall curves (higher is better) based on noise addition (first and third from the left) and secure cosine similarity (second and fourth from the left) for both datasets. The dotted red line is the vanilla LSH. We can clearly see that adding noise loses utility while the proposed approach is significantly better.}
       \label{fig:utilitytrade} 
\end{figure*}

To understand its implication, consider, an example with $s_0 = 0.75$ (high similarity) and $\epsilon = 0.05$. This combination requires $f \ge 0.86$. Such high $f$ implies that most bits ($86\%$) are randomly chosen, and hence they are uninformative. Even for very similar (almost identical) $x$ and $y$, the collision probability is close to random.
This degrades the usefulness of LSH scheme significantly.
In contrast, for the same threshold $s_0=0.75$ and same epsilon $\epsilon=0.05$, secure LSH needs $k=8$ which leads to the collision probability expression $\frac{\mathcal{R}^8 + 1}{2}$. For $x =y$, i.e. $\mathcal{R} =1$, this expression is {\it always} 1. For $x$ and $y$ with similarity $0.95$, the collision probability is greater than $0.83$, significantly higher than $0.56$ obtained using noise addition (very close to the random probability $0.5$).


\section{Evaluations}
\label{sec:eval}

\subsection{Utility-Privacy Tradeoff}
In this section, we provide thorough evaluations of the accuracy and privacy trade-off in our framework. Our aim is two-fold: (i) We want to evaluate the benefits of our proposal compared to traditional LSH in preventing triangulation attack and simultaneously evaluate the effect of our proposal on the utility of near-neighbor search. (ii) We also want to understand the utility-privacy of noise addition techniques in practice and further quantify it with the trade-offs of our approach. It is important to have such evaluations, as pure noise addition may be a good heuristic on real datasets that prevents the triangulation attack without hurting accuracy.

{\bf Datasets:} We use the IWPC~\cite{international2009estimation} and Speed Dating datasets~\cite{fisman2006gender}. They belong to different domains but both contain private and sensitive attributes of the concerned individuals. The IWPC is a medical dataset which consists of 186 demographic, phenotypic, and genotypic features like race, medicines taken, and Cyp2C9 genotypes of 5700 patients. We split the records to 80\% for creating hash tables and 20\% for querying.
The dataset is publicly available for research purposes.
The type of data contained in the IWPC dataset is equivalent to that of other private medical datasets that have not been released publicly~\cite{fredrikson2014privacy}. Speed-Dating dataset has 8378 text survey samples, each has 190 features representing geometric features or answers to designed questions by subjects.

We focus on the cosine similarity search, therefore, our underlying LSH scheme is SimHash (or Signed Random Projections). The gold standard neighbors for every query were chosen to be the points with cosine similarity greater than or equal to 0.95. Please note that LSH is threshold-based~\cite{Proc:Indyk_STOC98}. Hence, we chose a reasonable high similarity threshold.

{\bf Baselines:} We chose the following three baselines for our comparisons.
{\bf 1. LSH:} This is the standard SimHash-based embedding.
{\bf 2. Secure LSH:} As described in Section~\ref{sec:secLSH}, we use our proposed transformation to make LSH secure. To study the utility-privacy trade-off, a range for privacy parameter $k = 2,4,6,8,12$ is chosen. Note, $k=1$ is vanilla SimHash.
{\bf 3. Noise-based LSH:} ~\cite{kenthapadi2012privacy} shows a way to release user information in a privacy-preserving way for near-neighbor search. The paper showed that adding Gaussian noise $N(0, \sigma^2)$ after the random projection preserves differential privacy. To compute the private variant of SimHash, we used the sign of the differentially random private vector (generated by perturbed random projections) as suggested in~\cite{kenthapadi2012privacy}. To understand the trade-off the noise levels are varied over a fine grid $\sigma=0, 0.25, 0.5, 0.75, 1.0, 1.5, 2.0$.

We generated 32-bit hashes for IWPC and 64-bit hashes for Speed-Dating using each of the competing candidate hashing schemes. For each query data, we ranked points in training data based on the Hamming distance of the competing hash codes. We then computed the precision and recall of the Hamming-based ranking on the gold standard neighbors. We summarized the complete precision-recall curves for both the datasets and all the competing scheme in Figure~\ref{fig:utilitytrade}. This is a standard evaluation for hashing algorithms in the literature~\cite{weiss2009spectral}. Higher precision-recall under a given ranking indicates a better correlation of binary Hamming distance with the actual similarity measure. A better correlation directly translates into a faster algorithm for sub-linear near neighbor search~\cite{Book:Rajaraman_11} with Hamming distance.

\begin{figure*}[t!]
	\centering
	\mbox{\hspace{-0.12in}
		\includegraphics[width=0.5\columnwidth]{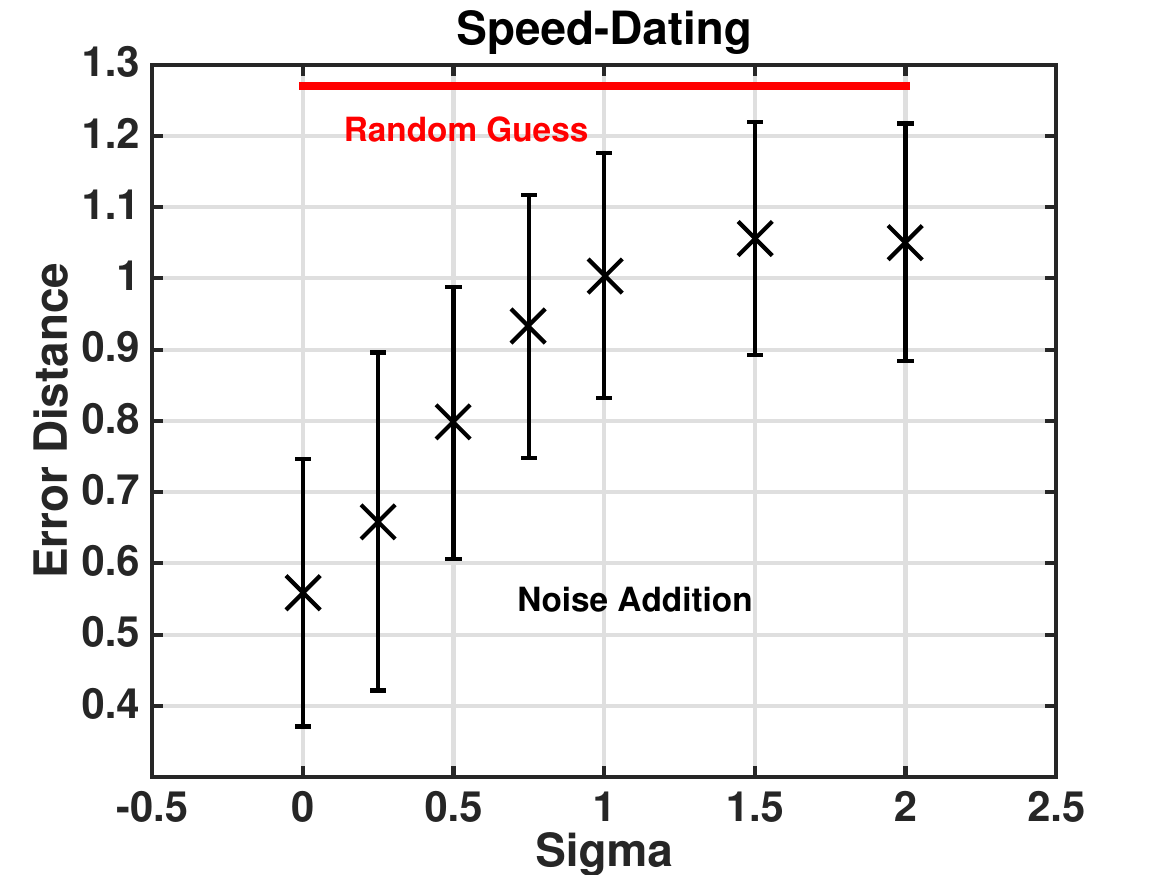}
		\includegraphics[width=0.5\columnwidth]{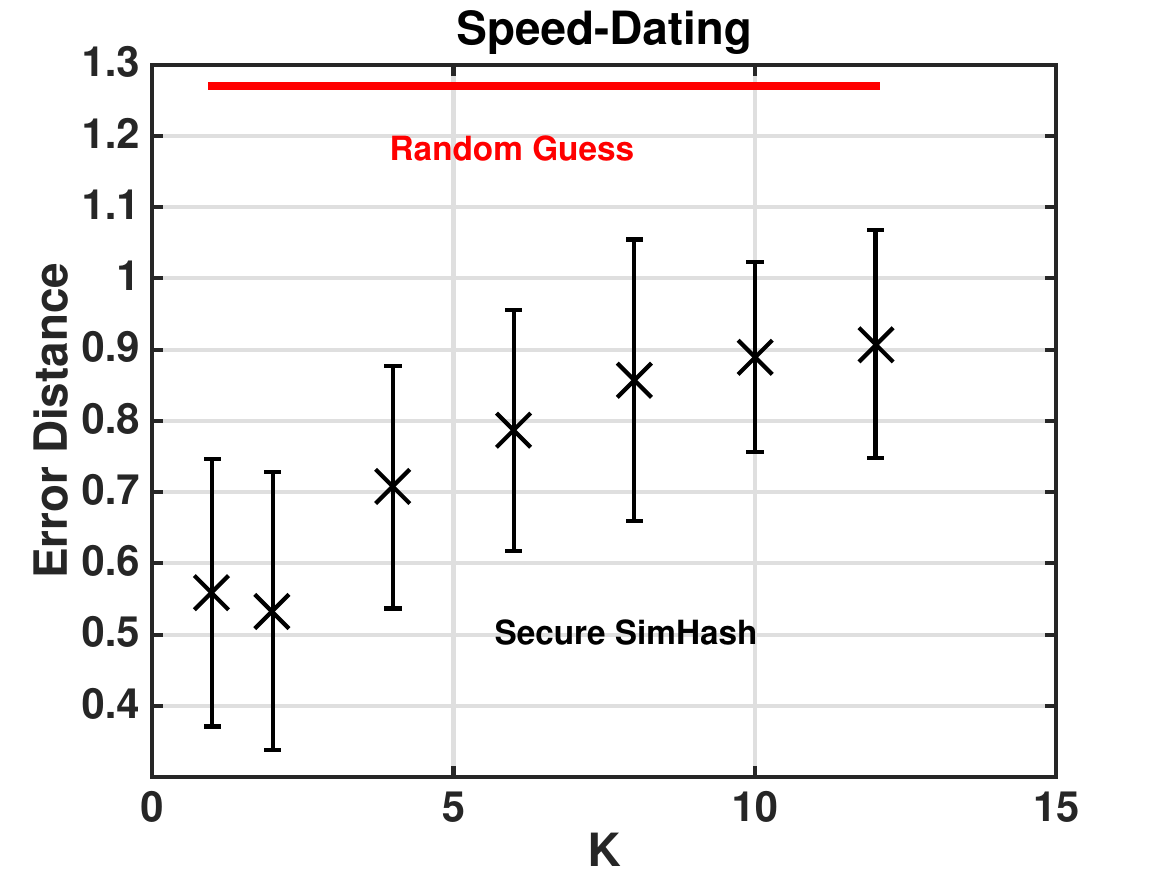}
	}
	\mbox{\hspace{-0.12in}
		\includegraphics[width=0.5\columnwidth]{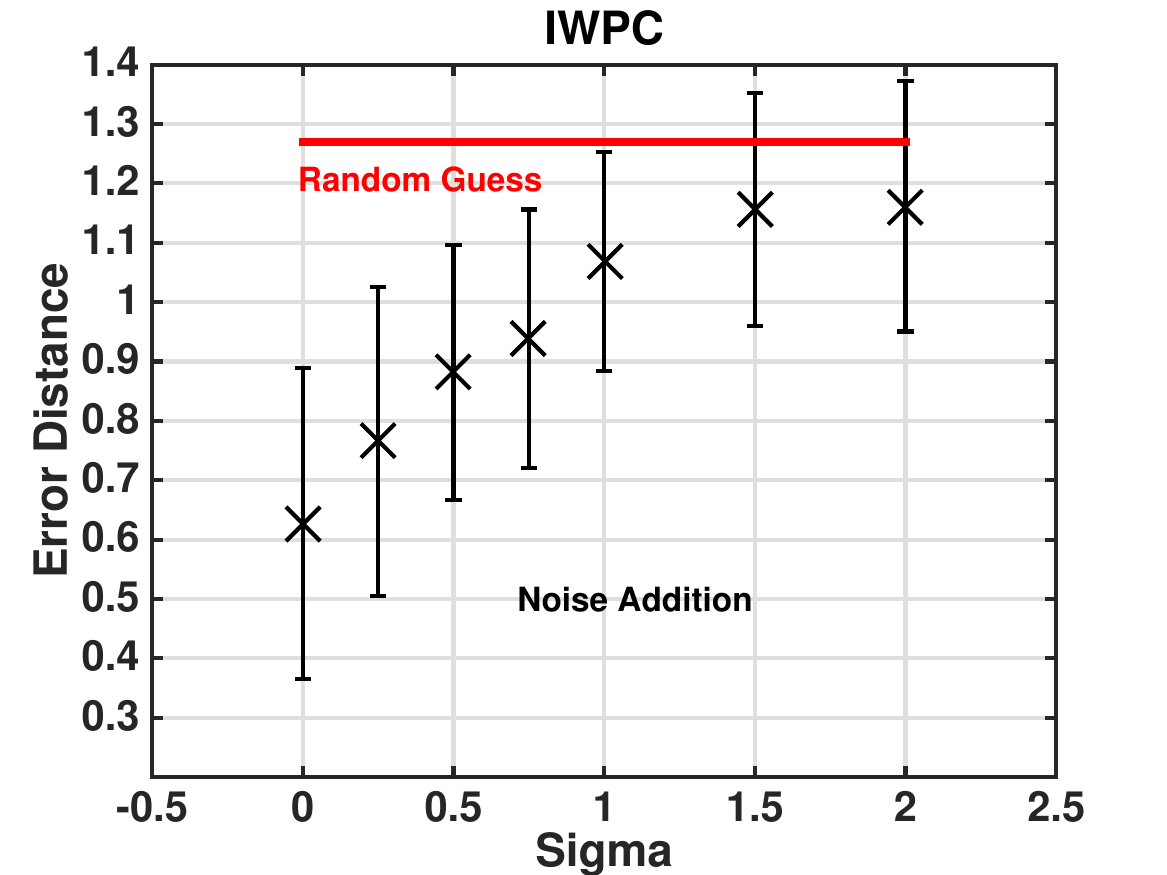}
		\includegraphics[width=0.5\columnwidth]{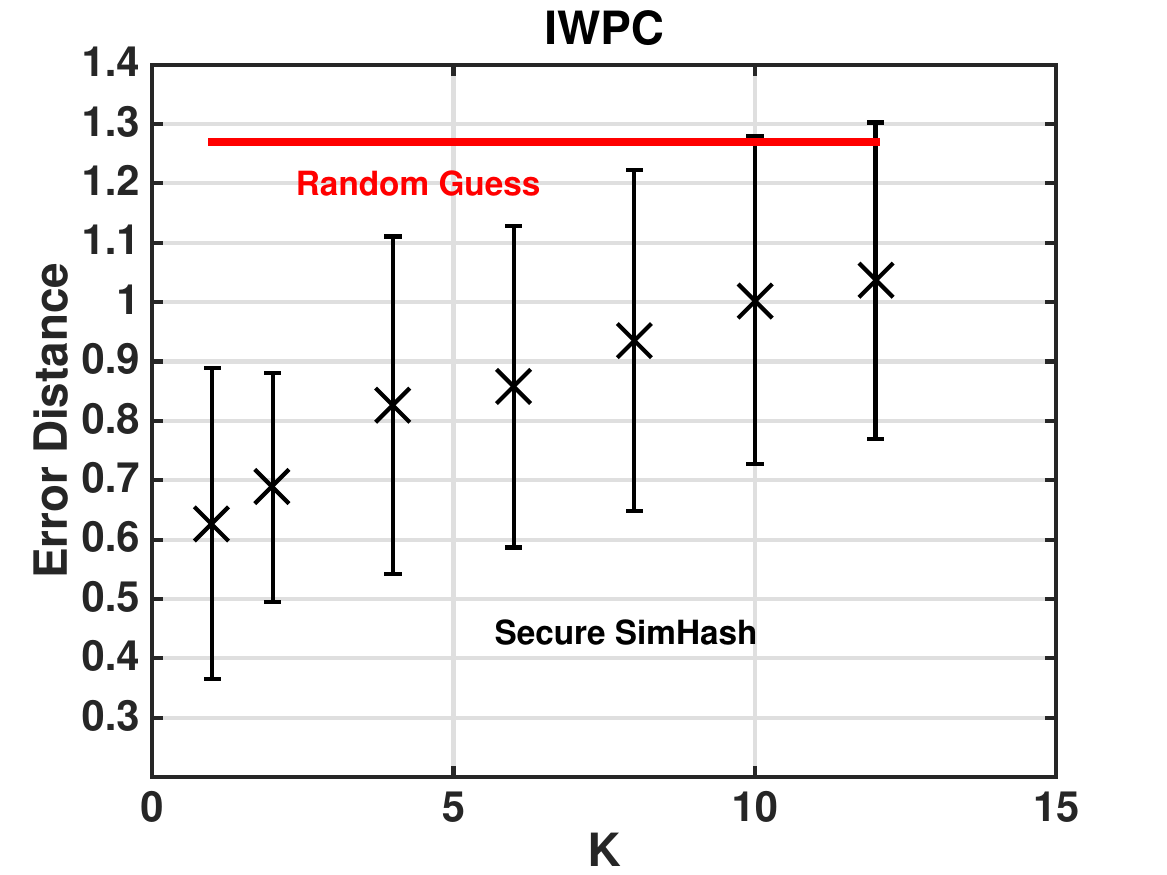}
	}
	\caption{ {\bf Effectiveness Against Triangulation Attack:} Plots show the error (mean and error bars) in the triangulation-attack-based inference of attributes (higher is more secure, random is holy grail). We can see that both adding noise (first and third form the left) and increasing $k$ with secure hashing (second and fourth form the left) lead to increased security. Contrasting this with Figure~\ref{fig:utilitytrade} clearly shows the superiority of our proposal in retaining utility for the same level of privacy.}
	\label{fig:privacytrade}
\end{figure*}

In Figure~\ref{fig:utilitytrade}, the first and third plots from the left-hand side show the retrieval precision and recall curve using various $\sigma$ in noise-based LSH. The Vanilla LSH line, which is the performance of LSH $k=1$ or $\sigma=0$ serves as the reference in the plots. By increasing $\sigma$, the accuracy of noise-based hashing drops dramatically. Adding noise as argued in Section~\ref{sec:noisePriv} leads to poor collision probability for similar neighbors which in turn leads to a significant drop in accuracy compared to LSH as evident from the plots. As the privacy budget is increased, by adding more noise, the performance drops significantly. In contrast, the second and fourth plots from the left-hand side show the precision and recall curve using different $k$s with Secure SimHash.

By increasing privacy budget $k$, the accuracy does not drop and even gets better than vanilla LSH. This improvement is not surprising and can be attributed to the enhanced gap between the collision probability of near-neighbor and any random pair (Figure~\ref{fig:r_to_powers}). It is known that with hashing-based techniques, such enhanced gap leads to a better accuracy~\cite{Proc:Indyk_STOC98}. The plots show a consistent trend across the datasets and clearly signify the superiority of our proposed transformation over both LSH and noise addition based methods in terms of retrieving near-neighbors. The result clearly establishes the importance of studying problem-specific privacy before resorting to obfuscation based on noise.

\subsection{Effectiveness Against Triangulation Attack}
We showed that irrespective of the privacy budget, our proposal is significantly more accurate than LSH and Noise-based LSH. Our theoretical results suggest that the proposal is also secure against triangulation attack, whereas, vanilla LSH is not. We validate this claim in this section. Furthermore, we also study the effectiveness of noise addition in preventing the attack.

To evaluate the privacy, we implemented the ``triangulation attack'' and inferred its accuracy on real datasets, IWPC, and Speed-Dating. The task was to infer sensitive attributes of a given target query vector by triangulating it with respect to randomly chosen points as explained in Section~\ref{sec:traingulations}. For IWPC, we selected some sensitive attributes for inference like cancer or not, set of medicines taken, or Cyp2C9 genotypes to form the attack data points. For Speed-Dating, we randomly chose the attributes for inferring. To scale-up the implementation for higher dimensions, we use a novel iterative projection algorithm which successively approaches the target. The procedure is described separately in Section~\ref{sec:traingluationDes}, which could be of separate interest.

We used the same privacy budget, i.e., $k = 1,2,4,6,8,12$ for Secure SimHash and $\sigma=0, 0.25, 0.5, 0.75, 1.0, 1.5, 2.0$, for noise-based SimHash. Again, $k=1$ and $\sigma=0$ corresponded to the vanilla LSH method which will serve as our reference point. We computed the error of the estimated target using triangulation attack with the actual target. We then calculated the mean and standard deviations of the errors over 100 independent triangulation attacks. The errors for varying $k$ for our proposed secure LSH and varying $\sigma$ for noise-based LSH were summarized in Figure~\ref{fig:privacytrade}. We also plotted the accuracy of random guess which will serve as our holy grail for privacy. The attack accuracy for $k=1$ ($\sigma=0$) is substantially better than the random guess which clearly indicates the vanilla LSH is not secure. The decrease in attack accuracy with an increase in $k$ clearly shows the high security level of our solution.

As indicated by our theoretical results, the accuracy of the triangulation attack decreases and slowly approaches the random level (holy grail for privacy) as the privacy budget increases. We can conclude that both noise addition and our proposal effectively prevent triangulation attack. Increasing noise, as expected, preserves privacy but at a significant loss in utility. However, the retrieval experiments show that our proposal provides privacy {\it without} loss in accuracy. For all $\sigma$, there always exists some $k$ which could achieve significantly better performance for the same level of security.

\subsection{Computational Cost Comparison with SFE Protocols}\label{ssec:cwsfe}
In this section, we compare the performance of our protocol with the GC protocol, one of the most promising and efficient Secure Function Evaluation (SFE) protocols. In our scheme, we have integrated the GC protocol only for our black-box hash computation step that is computed independently and only once for each client. We will compare the performance of our protocol with the {\it pure execution of NNS} in GC to show the shortcomings of this approach. While GC protocol can compute NNS without any computational error (compare to Figure~\ref{fig:utilitytrade}), it has rather limited practical usage and scalability.
The recent work of~\cite{songhori2015compacting} has implemented the K-Near-Neighbor (KNN) search using TinyGarble~\cite{songhori2015tinygarble} framework, one of the most efficient GC frameworks. Based on their performance results, they report execution time of 6.7s for $N=128,000$ when processing 31-bit data.
According to their cost functions (which scales linearly with $N$ and input bit length), for $N=3$ Billion and input size of 1280-bit (same as ours), the execution time exceeds {\it 74 days}. In contrast, our protocol requires $0.415$ second for black-box hash computation
and $0.887$ second to search the hash-tables, resulting in an overall $1.3$ s execution time on the same machine.
We have also modified their solution and synthesized the circuit for NNS based on the Cosine similarity (see Appendix~\ref{sec:newexp}).
For the exact same problem and parameters as ours, their solution requires an estimated processing time of $1.5\times 10^{8}$ seconds and communication of $1.2\times 10^{7}$ GBytes. This clearly illustrates the superiority of our novel scheme over GC.

\section{Alternating Projections for Triangulation Attack}\label{sec:traingluationDes}
We provide the details of our implementation for the triangulation attack over SimHash with cosine similarity (angles) as the measure. We start with all normalized vectors. Given the target point $q$, we generate $D+1$ random points $X_i$s in the space.

\begin{align}
\label{eq:a}
q \in R^{D}, X_{i} \in R^{D}, \lVert X_{i} \rVert_2 &= \lVert q \rVert_{2} = 1,\\ \notag
&\forall i \in \{1, 2,..., (D+1)\}.
\end{align}

\noindent The distance between every $X_{i}$ and $q$,
\begin{equation}
\label{eq:b}
d_{i} = \lVert X_{i}-q \rVert_2, \ \  \forall i \in \{1, 2,..., (D+1)\}
\end{equation}
\noindent is estimated as described in Section~\ref{sec:traingulations}, first we estimate the angle $\theta$ using hash matches between $H(X_i)$ and $H(q)$: Then, we can get the distance $d_i$, from $\theta$ easily as the data is normalized. 

After finding all of the distances, we use Alternating Projection Method~\cite{gubin1967method} to find the possible intersection of $D+1$ $D$-dimensional spheres, $S_1, ..., S_{D+1}$, each with central point $X_i$ and radius $\lVert X_{i}-q \rVert_2$.  Any point in the intersection will likely be very close to the target point.
The procedure for computing the point in the intersection is summarized in Algorithm~\ref{POCS Algorithm}. $t_0$ is initialized to a random vector (representing the estimated location for the target point $q$) and is iteratively updated. ${\mathcal {P}}_{S_i}(t_{k})$ denotes the projection of point $t_k$ on sphere $S_i$. We generate the sequence of projections:
$$
{\displaystyle t_{k+1}={\mathcal {P}}_{S_{N}}\left({\mathcal {P}}_{S_{N-1}}(...{\mathcal {P}}_{S_1}(t_{k}))\right)},
$$

\begin{algorithm}[ht]
	\begin{algorithmic}[1]
		\caption{POCS Algorithm}
		\label{POCS Algorithm}
		\STATE{Initialize the maximum number of iteration $I_{\text{max}}$}
		\STATE{$t_0 = $ rand(1, D)} //D-dimensional random vector
		\STATE{$counter=0$}
		\REPEAT
		\FOR{$j$ = ${1\ to\ D+1}$}
		\STATE{$t_j = P_{S_j}(t_{j-1})$ //{P is projection\\ \ \ \ \ \ \ \ \ \ \ \ \ \ \ \ \ \ \ \ //of $t_{j-1}$ on $S_j$}}
		\ENDFOR
		\STATE{$counter++$}
		\UNTIL{Convergence == true or $counter==I_{\text{max}}$}
	\end{algorithmic}
\end{algorithm}

\section{Prior Art}
\label{sec:related}
PP-NNS is a heavily studied problem. However, existing solutions are limited with respect to at least one of the three requirements outlined in \sect{sec:intro}.
In addition to PP-NNS approaches discussed in \sect{sec:intro}, we briefly discuss most relevant prior works.
Several PP-NNS solutions are built upon the principals of cryptographically secure computation with the ability to compute on encrypted data~\cite{elmehdwi2014secure,chen2019sanns,songhori2015compacting,riazi2019mpcircuits,riazi2018chameleon,cramer2009multiparty,ppface,sadeghi2009efficient,evans2011efficient,larsen2019lower}.
The security of this approach, like cryptographic tools, is based on the hardness of certain problems in number theory (e.g. factorization of large numbers). Since every single bit in the computation is encrypted, distance calculations are computationally demanding and slow.

A line of work focuses on computing the distances using Paillier AHE scheme~\cite{paillier} and reporting the {\it nearest neighbor} to the query~\cite{ppface,sadeghi2009efficient,evans2011efficient}.
In~\cite{demmler2015aby}, authors leverage additive secret sharing to compute distances to find the closest point. 
The work~\cite{fingercode} enables a client to receive all similar entries (more than a pre-specified threshold).
In~\cite{zubertowards}, an approach based on TFHE~\cite{TFHE} is proposed to compute {\it argmin}. 
However, all of these solutions are several orders of magnitude slower than our proposed scheme and do not support an untrusted server.

In~\cite{chen2019sanns}, authors study the secure nearest neighbor search problem in which a client wants to query a database held by a server. The security model guarantees that no information about the query as well as the result is leaked to the server and clients learns nothing beyond the result of the search. However, it is assumed that the server holds the plaintext database which is in contrast to our security model where the server is not trusted. Authors use Homomorphic Encryption for distance computation and Garbled Circuits to identify minimum distances. 
In~\cite{riazi2019mpcircuits}, a solution for K nearest neighbors search is introduced where the database is {\it distributed} among many parties and a client wishes to query the aggregation of the database without revealing her input. The solution is based on the BMR protocol~\cite{beaver1990round}.

Another popular approach is to use information-theoretic secure multi-party computations, which guarantees that even with unlimited computational power no adversary can compromise the data. This method is based on secret-shared information to perform the secure computation and requires three or more servers. Securely computing pairwise distances needs ``comparison'' which is computationally intensive using secret-sharing alone and needs additional cryptographic blocks which limit the overall scalability~\cite{kilian1988founding,cramer2009multiparty}. These algorithms work by first computing all possible distances securely, before they find the near-neighbors based on minimum distance values. Irrespective of the underlying technique, calculating all distance pairs incurs $O(N)$ complexity.
Thus, the sub-linear time requirement cannot be satisfied by this class of techniques, rendering it unscalable to modern massive datasets.

There has been successful advances in the area of Differential Privacy (DP)~\cite{dwork2006differential,hardt2012simple,bourgeatlocal}. However, their security model and use cases of DP is different than ours. DP usually assumes a trusted server and aims to bound the information leakage when answering each query. In a very high level, a certain noise is added to the data stored on the database such that the statistical information of the database is preserved but an attacker cannot infer significant information about single entry in the database.

Order-Preserving Encryption (OPE)~\cite{boldyreva2009order,popa2013ideal} allows to carry out the comparison on encrypted version of data instead of the raw version. Wang et al.~\cite{wang2016practical} have proposed a solution based on OPE and R-tree for faster than linear PP-NNS. However, Naveed et al.~\cite{naveed2015inference} introduced several attacks that can recover original users' data from an encrypted database that are based on OPE or Deterministic Encryption (DTE). They have illustrated that the encrypted databases based on OPE or DTE are insecure.
Searchable encryption~\cite{song2000practical,curtmola2011searchable,kamara2012dynamic,kamara2013parallel,pappas2014blind,cash2014dynamic,prisearch2017} allows a user to store the encrypted data on the cloud server while being able to perform secure search. However, these solutions are limited to {\it exact} keyword search and are not compatible with NNS algorithms.

LSH is the algorithm of choice for sub-linear near-neighbor search in high dimensions~\cite{Proc:Indyk_STOC98}.
LSH techniques rely on randomized binary embeddings (or representations)~\cite{rane2010privacy,indyk2006polylogarithmic,aghasaryan2013use,petros,aumuller2018distance,aumuller2019fair}. These embeddings act as a probabilistic encryption which does not reveal direct information about the original attributes~\cite{indyk2006polylogarithmic, petros}. 
Due to the celebrated Jonson-Lindenstrauss~\cite{johnson1984extensions} or LSH property, it is possible to compare the generated embedding for a potential match.

\section{Conclusion}
This paper addresses the important problem of privacy-preserving near-neighbor search for multiple data owners while the query time is sub-linear in the number of clients. We show that the generic method of Locally Sensitive Hashing (LSH) for sub-linear query search is vulnerable to the triangulation attack. To secure LSH, a novel transformation is suggested based on the secure probabilistic embedding over LSH family. We theoretically demonstrate that our transformation preserves the near-neighbor embedding of LSH while it makes distance estimation mathematically impossible for non-neighbor points. By combining our transformation with Yao's Garbled Circuit protocol, we devise the first practical privacy-preserving near-neighbor algorithm, called Secure Locality Sensitive Indexing (SLSI) that is scalable to the massive datasets without relying on trusted servers. The paper provides substantial empirical evidence on real data from medical records of patients to online dating profiles to support its theoretical claims.

%

\balance

\bibliographystyle{ACM-Reference-Format}
\bibliography{slsh}

\setcounter{section}{0}
\section*{Appendices}
\section{Security of Black-Box Hash Computation}\label{sec:sbhc}
As we discussed in Section~\ref{sec:secure_hash}, the security of our black-box hash computation scheme is provided in Proposition~\ref{th:approachSecure}.

\begin{prop}\label{th:approachSecure}
	The proposed black-box hash computation scheme is secure in the honest-but-curious adversary model (standard security model in the literature) as long as two servers do not collude.
\end{prop}

{\bf Proof:}
The security proof of our scheme has two parts: (i) we need to prove that each server cannot infer {\it any} information about the actual client's input. (ii) {\it No} information about the final random seeds (that are used to compute hash value) are revealed neither to the servers nor the clients.

First, (i) is true because server \#1 gets the random value $v$ which is totally independent of $x$ and is generated randomly. Server \#2 receives $x \oplus v$ which is identical to the definition of one-time pad encryption and is proven to be secure (server \#2 doesn't have the encryption pad ($v$) and only receives $x \oplus v$). Please note that XORing the two secret shared values that are held by two servers ($v$ and $x \oplus v$), yields the client's input ($v \oplus (x \oplus v)=x$). But by the assumption of the non-colluding servers, this can never happen.

Second, (ii) is true because the GC protocol is a secure function evaluation protocol and by definition, at the end of the protocol none of the parties has any information about the other party's input. The inputs of server \#1 to the GC protocol consist of her random seeds and the pad $v$ while server \#2 inputs her random seeds along with $x \oplus v$. All other computations are done inside the GC protocol and are therefore secure. The final random seeds are the combination of the two random seeds from both servers inside the GC protocol and as a result, no one has access to its value.

\section{Circuit for Secure MinHash}\label{ssec:circuit_minhash}
As we discussed earlier, we need to design a Boolean circuit of the function that we want to evaluate securely. Here, we describe the circuit for computing secure MinHash. Figure~\ref{fig:circuit_minhash} shows the block diagram of the circuit. The circuit is designed based on the definition of secure MinHash. Our goal is to minimize the number of non-XOR gates due to the free-XOR technique~\cite{kolesnikov2008improved}. This technique makes the use of XOR gates almost free and hence the dominant cost metric is the number of non-XOR gates.

\noindent{\bf Inputs.} Each server puts her input as described bellow:
\begin{itemize}
	\item Server \#1: Binary vector $v$ of length $D$ that has been received from the client. Randomly generated $Seed_i\ \#1$ and $r_i$ where $i=1,\ 2,\ ...,\ k$.
	\item Server \#2: Binary vector $x \oplus v$ of length $D$ that has been received from the client. Randomly generated $Seed_i\ \#2$ and $r'_i$ where $i=1,\ 2,\ ...,\ k$.
\end{itemize}

\noindent{\bf Output.} 1-bit secure MinHash.
We now explain how the circuit works. First, $v$ and $x \oplus v$ are XORed to produce the real client's input ($x$) inside the circuit. Then two permutation seeds from two servers are used to permute the client's input. The final permutation is not revealed to either server since it is the combination of both permutations. The first seed together with $x$ are given to the first Waksman shuffling network~\cite{waksman1968permutation}. We have used this network to efficiently compute the random permutation with minimum number of non-XOR gates. The Waksman shuffling network is based on 2 input swapping blocks with 1-bit selection signal. To get the minimum number of non-XOR gates in the circuit, we have designed this block to have only one AND gate. Figure~\ref{fig:waksman_swap} shows the circuit for the swapping block. The swap signal is $s$, two inputs to the swap block are $a$ and $b$, and two outputs are $a'$ and $b'$. When $s=0$, $a'=a$ and $b'=b$ and when $s=1$, $a'=b$ and $b'=a$.
The number of swap blocks that are needed to implement the network is $F(N) = N\  log_2 N - N + 1$ where $N$ is the number of elements that are going to be shuffled (permuted)~\cite{waksman1968permutation} which in our case $N=D$.

\begin{figure}[ht]
	\centering
	\includegraphics[width=\columnwidth]{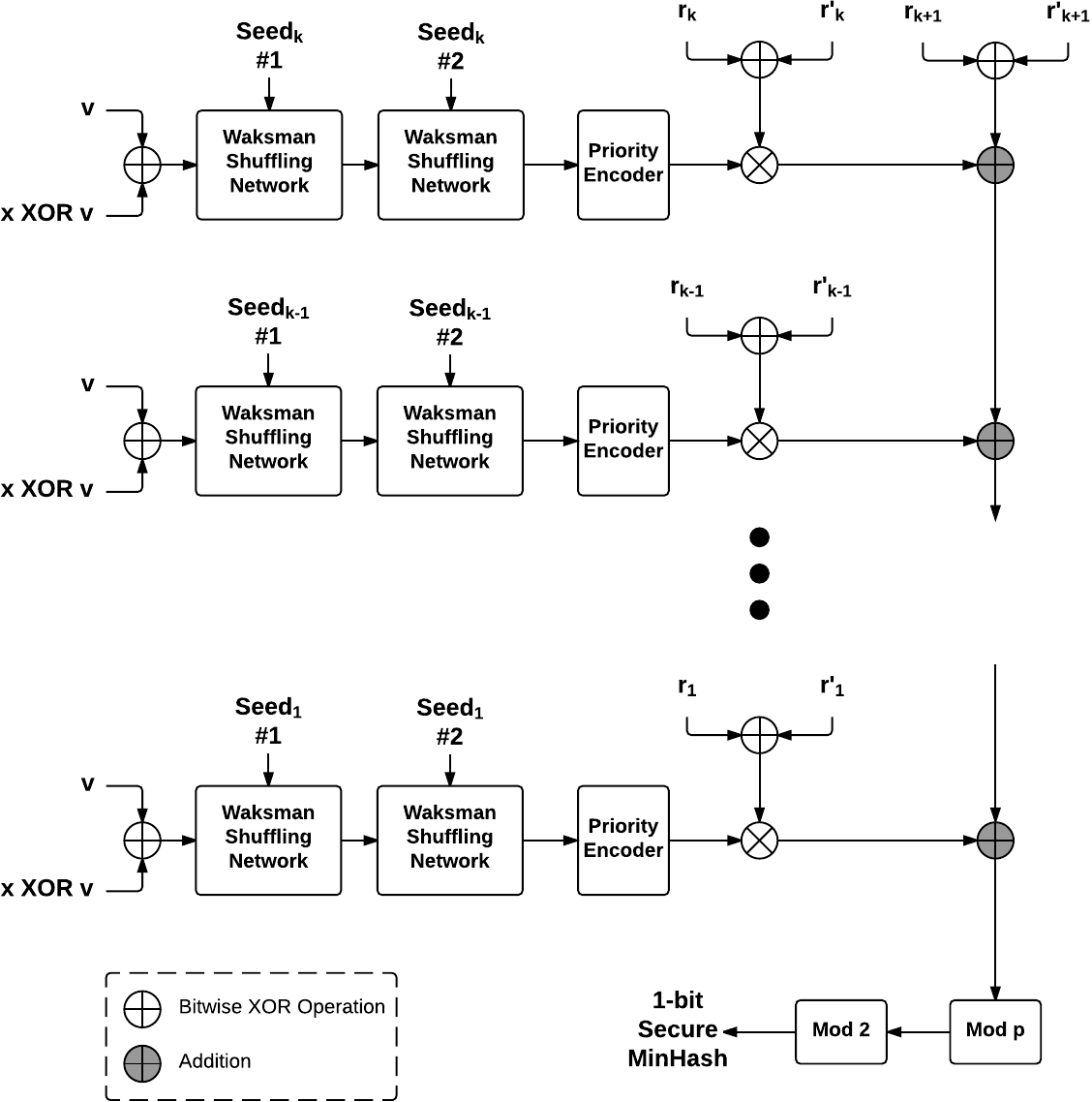}
	\caption{Boolean circuit for secure MinHash computation.
	}
	\label{fig:circuit_minhash}
\end{figure}

As the next step, we input the permuted client's input to the second Waksman network with the permutation seeds from the second server. The final permuted binary vector is then given to the priority encoder to find the index of the first non-zero element. Please note that from this point forward, the bit-length of wires are changed to $\log_2 D$ (because the index of the first non-zero element of $D$-bit binary vector needs to be described with $\log_2 D$ bits). At this point, we have computed a regular MinHash of client's input. Now, by the definition of secure MinHash, we need to multiply the hash with a random coefficient. Again, this coefficient is computed as the XOR of two random coefficients from two servers.

\begin{figure}[ht]
	\centering
	\includegraphics[width=0.5\columnwidth]{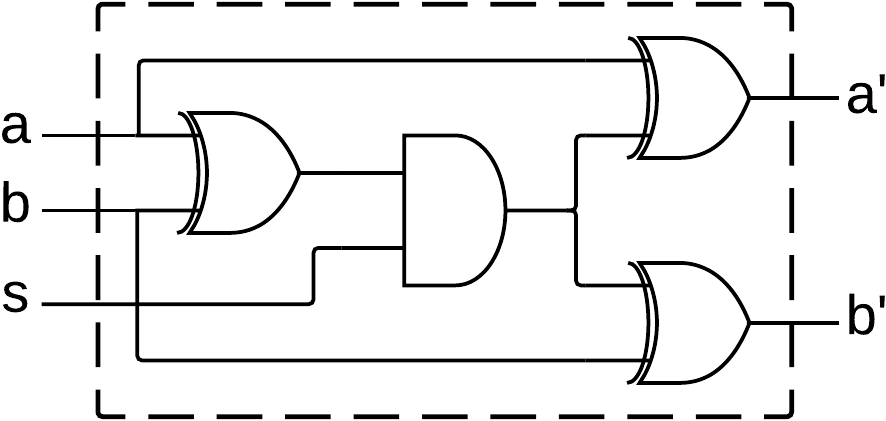}
	\caption{Boolean circuit used for swap block inside the Waksman shuffling network~\cite{waksman1968permutation}.
	}
	\label{fig:waksman_swap} 
\end{figure}

The same procedure is repeated $k$ times and the result of each step is added to the previous ones. In order to compute $mod\ p$ operation, we utilize a loop of subtraction to find the residue. The realization of $mod\ 2$ operation in a Boolean circuit can be performed by only outputting the Least Significant Bit (LSB) of the previous step. In the end, we have 1-bit secure MinHash. We need to run this circuit $l$ times to get $l$-bit secure MinHash.

\begin{table}[ht]
	\centering \normalsize
	\caption{Number of non-XOR gates in each block of the circuit.}
	\label{tab:circuit_cost}
	\begin{tabular}{l|c}
		Block Name & Number of non-XOR gates \\[0.1cm] \hline \hline \\[-0.3cm]
		Waksman Network & $D\times log_2 D - D + 1$ \\[0.1cm]
		Priority Encoder & $2\times D + (\frac{D}{2} -1)\times log_2 D$ \\[0.1cm]
		Multiplication & $(log_2 D)^2$ \\[0.1cm]
		Addition & $log_2 D$ \\[0.1cm]
		Mod p & $C_{mod}\times log_2 D$ $\dagger$\\[0.1cm]
		Mod 2 & 0 \\[0.1cm]
		Bitwise XOR & 0 \\[0.1cm]
	\end{tabular}
	\\
	\footnotesize{{$\dagger$} $C_{mod}$ is a constant that depends on the fixed prime $p$.}
\end{table}

{\bf Concrete Circuit Cost.}
We analyze the number of non-XOR gates in the circuit concretely and give a mathematical cost function based on our aforementioned parameters. Table~\ref{tab:circuit_cost} summarizes the costs.
All of the operations in Table~\ref{tab:circuit_cost} have to be performed $k$ times except for $mod\ p$ (one time) and Waksman network ($2\times k$ times). Therefore, the total cost (number of non-XOR gates) is given by the following formula:

\begin{align}\notag
&\#of\_non\_XOR\_gates =\\
k\times & ( \frac{5}{2}D\times log_2 D + (log_2 D)^2 + 2) + C_{mod}\times log_2 D
\end{align}

Please note that we need to run the circuit for $l$ times. Therefore, the total number of non-XOR gates in the GC protocol is $l$ times the total number of non-XOR gates in depicted circuit. Substituting $l=32$, $k=8$, and $D=1024$ ($2^{10}$), the total number of non-XOR gates is 6.58M. Using state-of-the-art GC frameworks such as TinyGarble~\cite{songhori2015tinygarble}, the end-to-end secure hash computation takes almost 1 second to finish.

\section{Circuit for Secure SimHash}\label{ssec:circuit_simhash}
In this section, we provide the Boolean circuit description of SimHash black-box computation. \fig{fig:circuit_simhash} shows the architecture of the circuit. Similar to the circuit of MinHash, server \#1 inputs vector $v$ of length $D$ along with her random seeds (Seed\_i \#1s and $r_i$s). Also, server \#2 inputs vector $x\oplus v$ and her random seeds (Seed\_i \#2s and $r'_i$s) to the circuit. Please note that each element of vector $v$ can be more than 1-bit as opposed to MinHash. Here, we implement our circuit for 32-bit fixed-point signed numbers. In contrast to MinHash circuit, the final random seeds in SimHash ($w_i$ vectors) are created by simply XORing the two random seeds from two servers. Since none of the servers has access to the other share, final $w_i$ vectors are not known to anyone. After calculating the vector dot product of $w_i$s and $x$, we extract the sign bit. Up to this point, we have computed $k$ different regular SimHashes. As described in \sect{sec:secSim}, we need to feed $k$ regular SimHash values to the universal hash function (right half of the circuit). The coefficients in the universal hash function are also the XOR of two servers' shares. 1-bit Secure SimHash is computed after $mod\ p$ and $mod\ 2$ modules.

\begin{figure}[ht]
	\centering
	\includegraphics[width=\columnwidth]{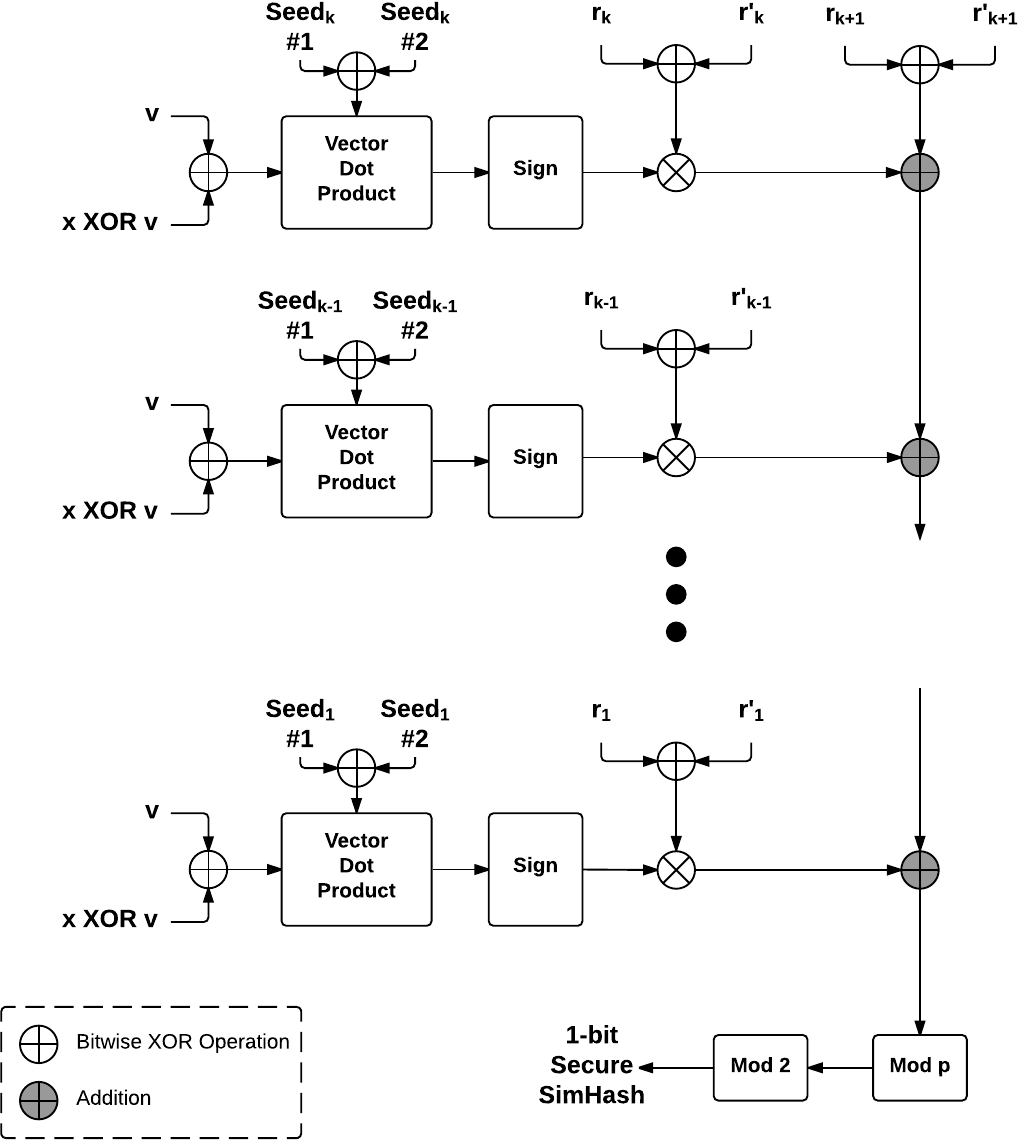}
	\caption{Boolean circuit for secure SimHash computation.
	}
	\label{fig:circuit_simhash}
\end{figure}

As mentioned in \sect{sec:secSim}, SimHash can be efficiently computed by randomly choosing values of $w_i$ between +1 and -1 with 0.5 probability~\cite{Book:Rajaraman_11}. Therefore, the ``Vector Dot Product'' of \fig{fig:circuit_simhash} can efficiently be implemented as $D-1$ 32-bit Addition modules. The ``Sign'' modules only output the sign bit and therefore does not require any Boolean gate to be implemented. In addition, since the output of the ``Sign'' module is only 1-bit, multiplying it with $r_i\oplus r'_i$ translates to changing the sign bit of $r_i\oplus r'_i$. These optimizations significantly reduce the number of gates in the circuit. Similar to secure MinHash, we need to evaluate the circuit $l$ times in order to create a $l$-bit secure SimHash.

\begin{table}[ht]
	\centering \normalsize
	\caption{Number of non-XOR gates in each block of the circuit.}
	\label{tab:circuit_cost_simhash}
	\begin{tabular}{l|c}
		Block Name & Number of non-XOR gates \\[0.1cm] \hline \hline \\[-0.3cm]
		Vector Dot Product & $32\times (D-1)$ \\[0.1cm]
		Multiplication & 0$\dagger$ \\[0.1cm]
		Addition & 32 \\[0.1cm]
		Mod p & $C_{mod}\times 32$ $\ddagger$\\[0.1cm]
		Mod 2 & 0 \\[0.1cm]
		Sign & 0 \\[0.1cm]
		Bitwise XOR & 0 \\[0.1cm]
	\end{tabular}
	\\
	\footnotesize{{$\dagger$} Because one operand is only a sign bit.}\\
	\footnotesize{{$\ddagger$} $C_{mod}$ is a constant that depends on the fixed prime $p$.}
\end{table}

{\bf Concrete Circuit Cost.}
We analyze the number of non-XOR gates in the circuit concretely and give a mathematical cost function based on our aforementioned parameters. Table~\ref{tab:circuit_cost_simhash} summarizes the costs.
All of the operations in Table~\ref{tab:circuit_cost_simhash} have to be performed $k$ times except for $mod\ p$ (one time). Therefore, the total cost (number of non-XOR gates) is given by the following formula:
\begin{align}\notag
&\#of\_non\_XOR\_gates =\\
k\times & ( 32\times D) + C_{mod}\times log_2 D
\end{align}

We run the GC protocol utilizing this circuit in the TinyGarble~\cite{songhori2015tinygarble} platform and the total execution time was 0.415 second for $k=12$, $l=32$, and $D=40$ (IWPC dataset parameters). Since each client only needs to perform this task once and independently of others, black-box hash computation only adds 0.415 second to the execution time of our end-to-end protocol.

\section{NNS Execution in GC}
\label{sec:newexp}
Secure K-Near-Neighbor (KNN) search has previously been studied by~\cite{songhori2015compacting}. They consider Hamming distance as their measure of similarity and utilize the GC protocol. In \sect{ssec:cwsfe}, we compared the performance of our scheme with theirs. However, we modify their approach by replacing the Hamming distance block with the {\it Cosine} similarity block. Also, in order to provide a fair comparison, we have changed the KNN search with the threshold-based NNS (same as ours). That is, instead of outputting the K nearest neighbors, we simply output whether two input data are more similar than a predefined threshold. Therefore, we have implemented a Boolean circuit that compares two input data and outputs the Boolean value one if their cosine similarity is more than a threshold. To find all near neighbors of a given query, we have to run the circuit for all the data on the server. Each time, we input the query together with one of the data in the database and after the GC protocol execution, we announce whether they are similar or not.

The number of non-XOR gates in the circuit for $D=40$ is $125,754$, where each value is represented as a 32-bit signed fixed-point number. In order to perform NNS on the database size of $N=3$ billion, $3.75\times 10^{14}$ non-XOR gates should be processed. Utilizing state-of-the-art GC-based framework~\cite{songhori2015tinygarble}, this task requires $1.5\times10^8$ second processing time and $1.2\times 10^7$ GBytes of communication.

\section{Compressed Sensing Lower Bounds}
\label{sec:CS}
Our protocol requires an assumption that multiple parties involved in the generating of $S(.)$ do not collude. 
However, even if parties collude and function $S(.)$ is compromised, it is significantly hard to invert $S(x)$. Revealing the exact mechanism of $S(.)$ poses a threat of possible inversion of the function $S(x)$ to obtain $x$ using Compressed Sensing. Note that there is a loss of information from $x$ to $S(x)$, due to heavy quantization and mod operations. Following are the two main reasons why such an inversion is hard even with the complete knowledge of $S(.)$:

\begin{enumerate}
\item {\bf Compressed Sensing Inference is Similar to Triangulation:} 
The algorithms for compressed sensing are only known for LSH-style signed measurements. Compressed sensing from our proposed secure LSH is not known, and we suspect it might be significantly difficult. Current compressed sensing algorithms work by iteratively reducing the possible space of the $x$, given $S(.)$, by introducing constraints about some known point~\cite{blumensath2009iterative}. This procedure is similar to triangulation attack. Thus, if triangulation attack does not work, it is unlikely that there will be an efficient compressed sensing algorithm. Compressed sensing attack for \sys{} is a topic for future work which could be of independent interest.

\item {\bf Compressed Sensing Lower Bounds:} Compressed sensing requires at
least $O(s\log{D})$ measurements for reasonable accuracy~\cite{candes2008introduction}, where $s$ is the number of non-zeros in the data vector and $D$ is the dimensionality of the data. Also, the big-O has large hidden constant. Thus, for high-dimensional vectors with significant non-zeros, the number of bits in $S(.)$ is smaller than $O(s\log{D})$, making it automatically secure against any possible attack since compressed sensing lower bounds are generic to any linear measurements~\cite{candes2008introduction}. Note that we only need few (constant number of) bit measurements to identify neighbors/non-neighbors in the Hamming distance. The number of bits required is independent of dimension and only depends on the similarity level.
\end{enumerate}

\end{document}